\title{Deterministically Simulating Barely Random Algorithms in the Random-Order Arrival Model} 
\titlerunning{Deterministically Simulating Barely Random Algorithms in the Random-Order Arrival Model} 
\author{Allan Borodin}{University of Toronto, Toronto, ON, Canada \and \url{https://www.cs.toronto.edu/~bor/}}{bor@cs.toronto.edu}{}{}
\author{Christodoulos Karavasilis}{University of Toronto, Toronto, ON, Canada \and \url{https://www.cs.toronto.edu/~ckar/}}{ckar@cs.toronto.edu}{}{}
\author{David Zhang}{University of Toronto, Toronto, ON, Canada}{dzhang@cs.toronto.edu}{}{}
\authorrunning{A. Borodin, C. Karavasilis, and D. Zhang} 
\keywords{Online Algorithms, Random-Order Model, Random Bit Extraction, Barely Random Algorithms, Derandomization} 
\renewcommand \thepart{}
\renewcommand \partname{}
\begin{document}

\doparttoc
\faketableofcontents

\maketitle

\begin{abstract}
Interest in the random-order model (ROM) leads us to initiate a study of utilizing random-order arrivals to extract random bits with the goal of derandomizing algorithms. Besides producing simple algorithms, simulating random bits through random arrivals enhances our understanding of the comparative strength of randomized online algorithms (with adversarial input sequences) and deterministic algorithms in the ROM. We consider three $1$-bit randomness extraction processes. Our best extraction process returns a bit with a worst-case bias of $2 - \sqrt{2} \approx 0.585$ and operates under the mild assumption that there exist at least two distinct items in the input.
We motivate the applicability of this process by using it to simulate a number of barely random algorithms for weighted interval selection (single-length with arbitrary weights, as well as monotone, C-benevolent and D-benevolent weighted instances), the proportional and general knapsack problems, job throughput scheduling, and makespan minimization. 

It is well known that there are many applications where a deterministic ROM algorithm significantly outperforms any randomized online algorithm (in terms of competitive ratios). The classic example is that of the secretary problem.  We ask the following fundamental question: Is there any application for which a randomized algorithm outperforms any deterministic ROM algorithm? Motivated by this question, we view our randomness extraction applications as a constructive approach toward understanding the relationship between randomized online algorithms and deterministic algorithms in the ROM. 
\end{abstract}

\newpage

\section{Introduction}
\label{sec:intro}
In the basic deterministic online model, input sequences are determined adversarially. The objective in competitive analysis is then to minimize the maximum (over all possible input sequences) {\it competitive ratio} between an optimal solution  and the algorithm's solution (in terms of some objective function). For randomized online algorithms, the usual model assumes an oblivious adversary; that is, an adversary chooses an input based on knowledge of an online algorithm and the random distributions it uses (but not any of the random draws from those distributions). The competitive analysis objective is now to minimize (over all input sequences) the ratio between the value of an optimum solution and the expectation of the algorithm's solution (where the expectation is in terms of the algorithm's random choices). In the deterministic random-order arrival model (ROM), an adversary chooses a multiset of input items, and then these items arrive in random order. With respect to the ROM model, the goal in competitive analysis is to minimize (over all adversarial input sets) the ratio between an optimum solution and the expectation of the algorithm's solution (where the expectation here is in terms of the random permutation of the input items).  

Returning to the fundamental question raised in the abstract, as far as we know it could be that for any online problem in the framework of request answer games \cite{Ben-DavidBKTW94}, if there is an algorithm $\proc{Alg}$ with competitive ratio $\rho$ then there is an algorithm $\proc{Alg}'$ whose ROM competitive ratio is $\rho$ (or more generally $f(\rho)$ for some function $f$). Such a result could be proved non-constructively (analogous to the relationship between randomized algorithms against an adaptive adversary and deterministic algorithms). Constructively, can we always simulate a randomized algorithm (with little or no loss in the competitive ratio) by a deterministic algorithm in the ROM model? In this regard, we have a very limited but still interesting conjecture: Any ``1-bit barely random online algorithm''  can be simulated with little loss by a deterministic  algorithm in the ROM model. We will show how to take advantage of the randomness in the arrival order to extract a single random bit (sometimes with a small bounded bias) to simulate a variety of 1-bit barely random online algorithms\footnote{In all our applications, we construct derandomized algorithms in the ROM whose competitive ratios satisfy $f(\rho)=c \cdot \rho$ for some small fixed constant $c\geq 1$.}. With the exception of the online identical-machine makespan problem, the simulating ROM algorithm provides the first known ROM competitive ratio or improves upon the currently best known competitive ratio in the ROM model. Our applications include several ``real-time scheduling'' algorithms in which case we need to define what we mean by a real-time algorithm in the ROM model.  

It is well understood that ROM ratios often provide a much more realistic measure of real world performance. It is also well known (see Kenyon \cite{Kenyon96}, Karande et al. \cite{Karande2011} and Gupta and Singla \cite{GuptaS20}) that any algorithm having competitive ratio $\rho$ in the ROM model is also a $\rho$-competitive algorithm when the online input items are drawn i.i.d. (independently and identically) from a distribution even if the distribution is not known.

\subsection{Our Results}
We initiate a study of utilizing random-order arrivals to extract random bits with the goal of derandomizing algorithms. 
While  randomness is implicitly being extracted (by sampling some initial inputs) in all secretary applications\footnote{The ``secretary'' terminology refers to random-order algorithms without revoking. The implicit extraction of randomness has also been utilized in \cite{PenaB19} where it is shown that disjoint copies of a $2 \times 2$  bipartite graph cannot be used to derive an asymptotic lower bound for matching.}, our randomness extraction is explicit and yields a random bit. 
In particular, we consider 1-bit  barely random algorithms that use one unbiased bit $b$ (i.e., equal probability that $b = 0$ or $b=1$)  to randomly choose between two deterministic online algorithms $\proc{Alg}_1$ and $\proc{Alg}_2$  with the property that for every input sequence $\sigma$, the output of $\proc{Alg}_1(\sigma) + \proc{Alg}_2(\sigma) \geq \proc{Opt}(\sigma) / c$ so that at least one of these two algorithms will provide a ``good''  approximation to an optimal solution on $\sigma$. We define the bias $\beta \geq \frac{1}{2}$ of a random bit $b$ to mean\footnote{We can also state the bias in terms of the $\Pr[b=0]$ in the same way. The extraction application can define whether we want the bias of a bit with $b=1$ or $b=0$.}  $\Pr[b=1] \in [\frac{1}{2},\beta]$  so that for an unbiased bit, $\beta = \frac{1}{2}$. Furthermore, these deterministic algorithms usually perform optimally or near optimally when $\sigma$ is a sequence of {\it identical} items. We first present a simple process that extracts a bit from a random-order sequence with a worst case bias of $\frac{2}{3}$ when there are at least two distinct items. We then study another simple process that extracts an unbiased bit if all input items are distinct, and then a combined process that will extract a random bit with a worst case bias of $2-\sqrt{2}\approx 0.585$ under the assumption that not all items are identical. We use this last process to derandomize a number of $1$-bit barely random algorithms. In particular, we derandomize the general and proportional knapsack algorithms of Han et al. \cite{han2015randomized}, the interval selection algorithms of Fung et al. \cite{fung2014improved} for single-length and monotone instances with arbitrary weights as well as C-benevolent and D-benevolent weighted instances, the unweighted throughput scheduling algorithms of Kalyanasundaram and Pruhs \cite{kalyanasundaram2003maximizing} and Chrobak et al. \cite{chrobak2007online}, and the Fung et al. \cite{fung2014improved} algorithm for single-length throughput instances with arbitrary weights. We refer the reader to Section \ref{sec:intervals} for the definitions of monotone, C-benevolent and D-benevolent weighted instances of interval selection. 
We also consider derandomizing the algorithm of Albers \cite{Albers02} for the makespan problem on identical machines, but here our algorithm is not as good as the random-order algorithm of Albers and Janke \cite{AlbersJ21}. Finally, we consider the algorithm of Kimbrel and Saia \cite{KimbrelS00} for the two-machine job shop problem, where we obtain no improvement over what is already known for deterministic algorithms under adversarial order.

With the exception of single-length interval selection, results for variants of the throughput problem are naturally in the ``real-time'' input model. In this model, problems are stated in terms of a global clock time $t$ with the assumption that for all input items $i$ and $j$, $i < j$ implies $r_i \leq r_j$, where $r_i$ is the release time of the $i^{th}$ job (i.e., the time at which the job arrives in the notation of scheduling theory). We assume a clairvoyant model where all parameters of a job are known at release time. If we apply a random permutation in a real-time model, we will lose the real-time assumption and be back in the standard online framework. To maintain the real-time assumption, we define the {\it real-time with random-order model} to mean that the set of release times in the set of input items are fixed and a random permutation is then applied to the remainder of each input item. As motivation, consider servicing periodic requests whose requirements may vary. In data centers, batches of jobs may arrive at known times, but the resources required to process each batch can depend on its contents or other external factors. Similarly, the buying and selling of stocks occurs at discrete time units often with specified limits on the time to execute a transaction. We will precisely specify this real-time model for the interval selection and throughput problems.




In scheduling theory, preemption of a job usually means preemption with {\it resuming} where interrupted jobs can be continued from the time of the interruption. Another real-time model allows for preemption with {\it restarting} where interrupted jobs must be completely restarted. The most restrictive version of preemption is the {\it revoking} model where interrupted jobs must then be discarded. We note that restarting (which includes the more restrictive revoking of jobs) provides no benefit to an optimal solution, whereas in the resuming model an optimal solution may also exploit the power of preemption. Hence, competitive ratios in the resumption model can be worse than in the restarting model. While resuming of jobs is only considered in the real-time model, both restarting and revoking can apply to the standard online model. To the best of our knowledge, restarting has only been used in the real-time model whereas revoking has mainly been applied in the standard online model. 

For most of the applications we consider, the randomized algorithms being simulated allow for revoking of previously accepted items and our ROM algorithms will also allow such revoking. Even when a randomized algorithm being simulated does not use revoking, we may still need to allow such revoking since deterministic algorithms can usually be forced into some initial bad decisions, whereas randomized algorithms avoid these bad decisions on average. To the best of our knowledge, with the exception of the knapsack and identical-machine makespan problems, none of the aforementioned applications have been studied in the random-order model. We also indicate that we can derandomize the 1-bit barely random makespan algorithm of Albers \cite{Albers02} for the makespan problem on identical machines but the competitive ratio of our derandomized algorithm is worse than the ROM algorithm of Albers and Janke \cite{AlbersJ21}. Regardless of whether our ratios improve upon those of existing ROM algorithms, we view all our derandomizations of 1-bit barely random algorithms as a ``proof of concept'' that constructively shows that there exists deterministic constant competitive algorithms in the random-order model. We note that unlike secretary style algorithms, the barely random algorithms we simulate do not need knowledge of $n$, the number of input items. Establishing optimal ROM ratios is then the next challenge.

Finally, if we make the strong assumption that all input items are distinct (i.e., no input item appears more than once), we indicate how we can extract more than one random bit.  Our 1-bit extraction only uses the very mild but necessary assumption that not all items are identical\footnote{There is no entropy in a random-order input when all input items are identical.}, noting that it is usually easy to analyze the performance of an online algorithm when all items are identical.

In the following table, we consider problems for which there are $1$-bit barely random algorithms that improve upon the deterministic competitive ratio. \textbf{Det.-LB} and \textbf{Rand.-LB} denote the best known deterministic and randomized lower bounds under adversarial order, respectively. \textbf{Rand.-UB} denotes the best known randomized upper bound under adversarial order, and \textbf{Det.-ROM} denotes the competitive ratio obtained by applying our bit extraction technique to derandomize algorithms in the ROM. With the exception of the makespan problems, our derandomizations strictly beat deterministic lower bounds under adversarial order. Bold entries correspond to results of this paper. For makespan problems, we use $n$ to denote the number of jobs in the input and $m$ to denote the number of available machines. We defer the definitions of terms used in Table \ref{table:table-of-results} to Section \ref{sec:applications}, where we specify each problem in detail.

    \begin{table}[h!]
    \centering
    \resizebox{\textwidth}{!}{%
    \begin{tabular}{lcccc}
    \toprule
    \textbf{Problem} & \textbf{Det.-LB} & \textbf{Rand.-LB} & \textbf{Rand.-UB} & \textbf{Det.-ROM} \\
    \midrule
    Knapsack & & &  &  \\
    \>\>\>\> General weights with revoking. & $\dagger$ \cite{iwama2010online} & $1+\frac{1}{e} \approx 1.368$ \cite{han2015randomized} & $2$ \cite{han2015randomized} & $\bm{\frac{1}{\sqrt2-1} \approx 2.414}$ \\
    \>\>\>\> Proportional weights without revoking. & $\dagger$ \cite{Marchetti-SpaccamelaV95} & $2$ \cite{han2015randomized} & $2$ \cite{han2015randomized} & ${\bm \dagger}$\\
    \>\>\>\> Proportional weights with revoking. & $\frac{1+\sqrt5}{2} \approx 1.618$ \cite{iwama2002removable} & $\approx 1.270$ \cite{hachler2025untrusted} & $\frac{10}{7}\approx 1.429$ \cite{han2015randomized} & $\bm {\frac{5}{3\sqrt2-1}\approx 1.542}$ \\
    Interval selection with revoking & & & & \\
    \>\>\>\> Equal-length intervals with \\\>\>\>\> arbitrary weights. & $4$ \cite{woeginger1994line} & $1+\ln 2 \approx 1.693$ \cite{epstein2008improved} & $2$ \cite{fung2014improved} & $ \bm{ \frac{1}{\sqrt2-1}}$ \\
    \>\>\>\> Monotone intervals with\\\>\>\>\> arbitrary weights. & $4$ \cite{woeginger1994line} & $1+\ln 2 \approx 1.693$ \cite{epstein2008improved} & $2$ \cite{fung2014improved} & $ \bm{\frac{1}{\sqrt2-1}} $ \\
    \>\>\>\> Any-length intervals with\\\>\>\>\> C-benevolent weights. & $4$ \cite{woeginger1994line} & $1+\ln 2 \approx 1.693$ \cite{epstein2008improved} & $2$ \cite{fung2014improved} & $ \bm{ \frac{1}{\sqrt2-1}}$ \\
    \>\>\>\> Any-length intervals with\\\>\>\>\> D-benevolent weights. & $3^{\S}$ \cite{woeginger1994line} & $\frac{3}{2}^{\S}$ \cite{epstein2008improved} & $2$ \cite{fung2014improved} & $\bm {\frac{1}{\sqrt2-1}}$ \\
    Single-machine job throughput & & & & \\
    \>\>\>\> Any-length jobs with\\\>\>\>\> unit weights and resuming. & $\dagger$ \cite{baruah1994line} & $\Omega(1)$ & $\approx 258048$ \cite{kalyanasundaram2000optimal} & $\bm{\approx 311491}$\\
    \>\>\>\> Equal-length jobs with\\\>\>\>\> unit weights (and no recourse.) & $2$ \cite{goldman2000online} & $\frac{4}{3}$ \cite{goldman2000online} & $\frac{5}{3}$ \cite{chrobak2007online} & $\bm{\frac{5}{2\sqrt2}\approx 1.768}$\\
    \>\>\>\> Equal-length jobs with\\\>\>\>\>
     arbitrary weights and restarting. & $4$ \cite{woeginger1994line} & $\frac{6}{5}$ \cite{chrobak2007online} & $3$ \cite{fung2014improved} & $\bm{\frac{3}{2\sqrt2-2} \approx 3.621}$ \\
    2-machine job shop with resuming. & $2-\frac{1}{n}$ \cite{KimbrelS00} & $\frac{3}{2}-\frac{1}{2n}$ \cite{KimbrelS00} & $\frac{3}{2}$ \cite{KimbrelS00} & $\bm{10-5\sqrt2\approx 2.930}$\\
    Identical-machines makespan. & $1.88^{\P}$ \cite{Rudin2001ImprovedBounds} & $\frac{1}{1-(1-\frac{1}{m})^m} \xrightarrow[]{m \to \infty} \frac{e}{e-1}$ \cite{chen_vanVliet_woeginger_1994_lower_bound_randomized_online_scheduling,sgall1997lower} & $1.916$ \cite{Albers02} & $\bm {1.916(4-2\sqrt2)+1\approx 3.245}$ \\
    \bottomrule
    \end{tabular}}
    \caption{\footnotesize\\$^{\dagger}$ No constant ratio is possible.\\
    $^{\S}$ This result holds for D-benevolent instances whose weight functions are surjective on $\mathbb{R}_{\geq 0}$. Most lower bounds for D-benevolent instances only hold for functions that satisfy the same condition.
    \\$^{\P}$ This bound holds for
    for $m \ge 3600$ machines; many (better) bounds exist for fewer machines. 
    }
    \label{table:table-of-results}
    \end{table}

To the best of our knowledge, we are the first to study the knapsack problem with revoking (sometimes called removal) in the ROM. Our result is an improvement over the current best $6.65$-competitive randomized ROM algorithm for the general knapsack problem due to Albers, Khan and Ladewig \cite{AlbersKhanLadewig2021}. We note that their algorithm requires advance knowledge of the number of input items whereas we do not. However, their algorithm is {\it without revoking} whereas our algorithm requires revoking.

Borodin and Karavasilis \cite{BorodinK23} give a deterministic $2.5$-competitive algorithm for unweighted interval selection with arbitrary lengths and revoking in the ROM. Since D-benevolent weights include the unweighted case, by derandomizing the D-benevolent interval selection algorithm of Fung et al. \cite{fung2014improved}, we improve this bound, achieving a deterministic $\frac{1}{\sqrt2-1}\approx 2.414$-competitive algorithm. We note that the Fung et al. algorithm is only for the real-time model whereas the Borodin and Karavasilis algorithm is for the more general online model. Finally, we note that the lower bound of 4 due to Woeginger \cite{woeginger1994line} for equal-length intervals with arbitrary weights follows from a real-time construction, while the algorithm of Fung et al. \cite{fung2014improved} is 2-competitive even in the general online model.

With the exception of the result of Borodin and Karavasilis for unweighted interval selection, we establish the first competitive ratios in the ROM for the aforementioned variants of interval selection and unweighted single-machine throughput maximization. We also initiate the study of the two-machine job shop problem in the ROM by derandomizing the algorithm of Kimbrel and Saia \cite{KimbrelS00}, though we do not improve upon the tight ratio of 2 for deterministic algorithms under adversarial order. Albers and Janke \cite{AlbersJ21} present a deterministic $1.848$-competitive algorithm for the identical-machines makespan problem in the ROM, which beats the ratio obtained by our derandomization.

\section{Preliminaries}
\label{sec:preliminaries}
In the basic online model, input items arrive sequentially. The set of items and the order in which they arrive is determined by an adversary. When the $i^{th}$ input item arrives, an immediate irrevocable deterministic (respectively, randomized) decision must be made based on the first $i$ items without knowledge of any remaining items. We will refer to such algorithms as deterministic (respectively, randomized) online algorithms. When defining randomized algorithms, we have to clarify the nature of the adversary. As is most standard, we will assume an oblivious adversary. An oblivious adversary knows the algorithm and the distributions it is using to make decisions, but does not know the outcomes for any sampling of these distributions. We will mainly be considering barely random algorithms, which only use one unbiased random bit. We will also discuss barely random algorithms that use a small constant number of unbiased random bits. The length $n$ of the input stream may be known or not known \textit{a priori}. We refer the reader to \cite{BorOnlineBook} for a general introduction to online algorithms and to \cite{Ben-DavidBKTW94} for a discussion of the relative power of the different adversaries.

The basic model has been extended in two important ways. First, online algorithms {\it with recourse} extend the basic model by allowing some limited way for decisions to later be modified. The simplest form of recourse allows accepted or scheduled input items to be permanently deleted. These algorithms  will be referred to as online algorithms {\it with revoking}. Second, online algorithms are extended to relax the assumption that the input stream is completely adversarial. In addition to the online input model, we will consider the real-time model and the random-order model (ROM).  The real-time model is the basic model used in scheduling algorithms. In the real-time model, every input item includes a time $r_i$ when the $i^{th}$ item is released or arrives and make the assumption that $r_i \leq r_{i+1}$ for all $i \geq 1$. See Lawler et al. \cite{LawlerLKS93} for a general reference to scheduling algorithms. 

In the random-order arrival model, the adversary chooses the multiset of input items which then arrive in random order.
There is a large body of work concerning deterministic and randomized algorithms in the random-order model (mainly without revoking) dating back to at least the secretary problem. See Gilbert and Mosteller \cite{GilbertM66} for a history of the secretary problem.
If we apply a random permutation to a set of real-time instances, we will no longer be in the real-time model since the release times are being permuted. However, we will show how to effectively merge the real-time and ROM models in our applications for interval and throughput scheduling.

With the exception of the makespan (including job shop)
problems, our applications are maximization problems. Specifically, we consider packing problems where any subset of a feasible solution is feasible. Hence, we can always revoke any previously accepted or scheduled input items and maintain feasibility.

We consider the {\it competitive ratio} of an algorithm as a measure of performance.
For maximization problems, there is no fixed convention as to whether or not these ratios should be stated as being greater than 1 or less than 1. To maintain consistency throughout the paper, we adopt the convention that ratios be greater than 1 but note that results for the knapsack problem are sometimes given as ratios less than 1. When discussing each application we will define each problem and provide related references. We note that all of our applications have been widely studied in various settings and we are only providing the most relevant literature.

In defining competitive ratios, we follow the convention that competitive ratios are asymptotic.
For a maximization problem, if $\proc{Opt}(x)$ is the value of an optimal solution and $\proc{Alg}(x)$ is the value produced by an algorithm on input $x$,
the competitive ratio of a deterministic online algorithm is defined as $\proc{Opt}(x) \leq \rho \cdot \proc{Alg}(x) + o(\proc{Opt}(x))$.
We say that $\rho(\proc{Alg})$ is a strict competitive ratio when there is no additive term. 


For randomized algorithms, we modify the definition by taking the expectation ${\mathbb E}[\proc{Alg}(x)]$ over the random choices of the algorithm. For a ROM algorithm, we again take the expectation ${\mathbb E}[\proc{Alg}(x)]$ where now the expectation is over the random permutations of the input sequence. Our applications will take a barely random  online algorithm $\proc{Alg}$ with (perhaps strict) competitive ratio $\rho$ and create a deterministic ROM algorithm $\proc{Alg}'$ with asymptotic competitive ratio $\rho'$ that is within a small constant factor of $\rho$.  The precise ratios for $\rho'$ can  depend on the application.  Once again, we emphasize that we see these results as a proof of concept in that most of our deterministic ROM ratios are, to the best of our knowledge, either the first known ROM results or improve upon previously known results in some way. Furthermore, if the given randomized algorithm is ``conceptually simple'', then the ROM algorithm will also be conceptually simple although the analysis in some of our applications may not be. We see these results as a small step towards understanding the relationship between randomized online and deterministic ROM results.

\section{Randomness Extraction}
\label{sec:randomness-extraction}

 We recall our motivating question with regard to online algorithms; namely, what  is the power of random-order deterministic algorithms relative to adversarial order randomized algorithms? We know that there are a number of problems where deterministic random-order algorithms provide provably better (or at least as good) competitive ratios than randomized algorithms with adversarial order (e.g., the secretary problem, bin packing \cite{Chandra1992,HebbarKS2024,Kenyon96}, and bipartite matching \cite{Mehta2013}). But are there problems where adversarial order randomized algorithms are provably (or at least seemingly) better than random-order deterministic algorithms? It is natural then to see if we can use the randomness in the arrival of input items to extract random bits. Such bits may then be used to derandomize certain algorithms when assuming random-order arrivals. \textit{Barely random} algorithms \cite{reingold1994randomized} use a (small) constant number of random bits, and are well suited to be considered for this purpose. These algorithms are often used in the \textit{classify and randomly select} paradigm, where inputs are partitioned into a small number of classes, and the algorithm randomly selects a class of items to work with.

Our focus will be on simple $1$-bit extraction algorithms, which we can apply to $1$-bit barely random algorithms. We briefly consider a way of extracting two biased bits for derandomizing an algorithm for two-length interval selection with arbitrary weights, and discuss possible extensions to extracting multiple bits from random-order sequences in Section \ref{sec:beyond-1-bit}. 



 Our first randomness extraction process\footnote{Our extractors sit in the background of algorithms that request the random bit. For this reason, we find it more natural to refer to them as processes.} is described in Algorithm \ref{alg:proc-1}. We assume that there exist at least two different classes, or item types, that all input items belong to. The item types used by Process \ref{alg:proc-1} are defined by the first item that arrives. Items that are identical to the first item are of the same type, while any distinct item is of a different type. Furthermore, we maintain a counter that represents the number of items that have arrived so far. Our process returns $1$ if the first item of the second type arrives when the counter is even, and returns $0$ otherwise. The adversary can choose the frequency of each type to affect the probability of getting either output, but we show that the worst case bias of this bit is $\frac{2}{3}$.

\begin{algorithm}
\caption{Process 1: Biased bit extraction}\label{alg:proc-1}
\begin{algorithmic}
\State On the arrival of $I_i$:
\If{$i = 1$}
    \State $type \gets type(I_1)$
    \Else
        \If{$type \neq type(I_i)$}
            \State \Return $1-(i \mod 2)$ and terminate the process.
        \EndIf
\EndIf
\end{algorithmic}
\end{algorithm}
\begin{theorem}
    The bit $b$ extracted by Process \ref{alg:proc-1} has a worst case bias of $\frac{2}{3}$. That is, $\Pr[b=1] \in (\frac{1}{2},\frac{2}{3}]$.
\end{theorem}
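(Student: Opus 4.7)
The plan is to analyze the random variable $\tau$, defined as the position at which the first item whose type differs from $\text{type}(I_1)$ arrives. Since Process~\ref{alg:proc-1} returns $b = 1 - (\tau \bmod 2)$, the output satisfies $b = 1$ exactly when $\tau$ is even, so the statement reduces to bounding $\Pr[\tau \text{ even}] \in (1/2, 2/3]$ over all adversarial type multiplicities $n_1, \ldots, n_k$ with $k \geq 2$ and $\sum_t n_t = n$.

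First I would condition on $t = \text{type}(I_1)$ (which occurs with probability $n_t/n$) and use exchangeability of the random permutation: after conditioning, only ``matches $t$'' versus ``does not match $t$'' influences $\tau$, so the conditional distribution is
\[
\Pr[\tau = j \mid \text{type}(I_1) = t] \;=\; \frac{(n_t - 1)(n_t - 2) \cdots (n_t - j + 2)}{(n - 1)(n - 2) \cdots (n - j + 2)} \cdot \frac{n - n_t}{n - j + 1}.
\]

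The crux of the upper bound is an exact pairing identity at the base. Direct computation, using the algebraic simplification $a(a-1)b + b(b-1)a = ab(n-2)$ in the reduced two-type view with counts $(a, b) = (n_t, n - n_t)$, yields $\Pr[\tau = 2] = 2\,\Pr[\tau = 3]$, so the pair $(\tau = 2, \tau = 3)$ contributes an even-fraction of exactly $2/3$. My plan is to extend this pairing: for each $k \geq 1$, show $\Pr[\tau = 2k] \leq 2\,\Pr[\tau = 2k + 1]$, so that every consecutive pair $(2k, 2k+1)$ contributes at most $2/3$ to the even-indexed mass; summing over $k$ then gives $\Pr[\tau \text{ even}] \leq 2/3$. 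The strict lower bound $\Pr[b = 1] > 1/2$ follows from the strict form of the base identity together with $\Pr[\tau = 2] > 0$ whenever at least two distinct items are present.

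The main obstacle I foresee is proving the pairing inequality at $k \geq 2$ uniformly in the adversarial profile, since the hypergeometric ratio $\Pr[\tau = j + 1]/\Pr[\tau = j]$ does not equal a clean $1/2$ but depends delicately on the residual counts. A plausible route is to peel off the first $2(k - 1)$ matching items by conditioning, reducing the pair $(2k, 2k + 1)$ to the pair $(2, 3)$ on a smaller residual instance and reapplying the base identity. An alternative is to express the conditional even-probability as the finite-$n$ analogue of the geometric identity $(1 - p) \sum_{\ell \geq 0} p^{2\ell} = 1/(1 + p)$ with $p = n_t/n$, then invoke concavity of $p \mapsto p/(1 + p)$ to identify the balanced two-type distribution as the adversary's worst case, for which the bound evaluates to $2/3$ in the asymptotic limit with controllable finite-$n$ corrections.
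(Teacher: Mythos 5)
There is a genuine gap: the pairing inequality $\Pr[\tau=2k]\le 2\Pr[\tau=2k+1]$ on which your upper bound rests is false for $k\ge 2$ in the exact without-replacement model you set up. In the two-type view with counts $(a,b)=(n_t,\,n-n_t)$ the exact (unconditioned) law is $\Pr[\tau=j]=\bigl(a(a-1)\cdots(a-j+2)\,b+b(b-1)\cdots(b-j+2)\,a\bigr)/\bigl(n(n-1)\cdots(n-j+1)\bigr)$, and for balanced counts the odd terms die out before the even ones: with $a=b=3$, $n=6$ you get $\Pr[\tau=4]=\tfrac{1}{10}$ but $\Pr[\tau=5]=0$. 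Worse, in this exact model the conclusion itself fails: that instance has $\Pr[\tau\text{ even}]=\tfrac{3}{5}+\tfrac{1}{10}=\tfrac{7}{10}>\tfrac{2}{3}$, and with $n=2$ distinct items the bias is $1$. So no argument at the level of exact hypergeometric probabilities can deliver the $\tfrac{2}{3}$ bound; the paper's proof explicitly assumes $N$ very large and sampling from an infinite population (i.e., with replacement), and only under that approximation does your pairing hold (it becomes $\alpha^{m}+\beta^{m}\le 2(\alpha^{m+1}+\beta^{m+1})$ with $\alpha+\beta=1$, a Chebyshev-type inequality) and only there is the stated range $(\tfrac12,\tfrac23]$ correct. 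Your proposed rescue by peeling off the first $2(k-1)$ matching items also fails: the exact base identity $\Pr[\tau=2]=2\Pr[\tau=3]$ is true only because it averages over which of the two types arrives first; once you condition on the first item's type (as peeling forces you to), the $2{:}1$ ratio holds precisely when $n_t=n/2$, so the reduction does not reproduce the base case on the residual instance.

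Your fallback route is essentially the paper's proof --- per first-item type of fraction $\alpha$ the with-replacement even-probability is $\tfrac{1}{1+\alpha}$, and one optimizes over the profile --- but two corrections are needed. First, the finite-$n$ deviations are not ``controllable'' in your favor: without replacement the matching run is more likely to terminate (and must terminate by step $n_t+1$), which pushes the bias above $\tfrac23$, so you must adopt the paper's asymptotic assumption outright rather than treat it as an error term. Second, concavity of $p\mapsto p/(1+p)$ does not single out the balanced two-type profile as worst over general multiplicities $n_1,\dots,n_k$: the resulting bias is $\sum_t \alpha_t/(1+\alpha_t)$, which increases when types are split and tends to $1$ (with all items distinct, Process \ref{alg:proc-1} outputs $1$ deterministically). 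The $\tfrac23$ bound is really a statement about exactly two types, which is how the paper proves it: it computes the closed form $f(\alpha)=\tfrac{\alpha}{1+\alpha}+\tfrac{1-\alpha}{2-\alpha}$ from the geometric series and maximizes it at $\alpha=\tfrac12$. Your strict lower bound $\Pr[b=1]>\tfrac12$ is also only sketched; in the with-replacement model it follows from $\Pr[\tau=2k]\ge\Pr[\tau=2k+1]$ within each pair plus the strict $2{:}1$ base pair, but as written it does not follow from the base identity alone.
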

\begin{proof}
    Let there be $A$ items of $type_{A}$ and $B$ items of $type_{B}$, with $N = A + B$. Let $E_{v}$ be the event where the second item type arrives on an even counter. Let also $A_{e}$ (respectively, $B_{e}$) be the event that the first appearance of $type_{A}$ (respectively, $type_{B}$) is on an even counter, and $F_{A}$ (respectively, $F_{B}$) be the event that the very first item that arrives is of $type_{A}$ (respectively, $type_{B}$).
    We assume that $N$ is very large and that we are sampling from an infinite population. We have that:
    \[\Pr[E_{v}]= \Pr[B_{e}|F_{A}]\cdot \Pr[F_{A}] +  \Pr[A_{e}|F_{B}]\cdot \Pr[F_{B}]\]
    We start by computing $\Pr[B_{e}|F_{A}]$:
    \[ \Pr[B_{e}|F_{A}]= \frac{B}{N} + \left(\frac{A}{N}\right)^{2} \cdot \frac{B}{N} + \left(\frac{A}{N}\right)^{4} \cdot \frac{B}{N} + \left(\frac{A}{N}\right)^{6} \cdot \frac{B}{N} +\dots = \frac{B}{N}\sum_{i=0}^{+\infty}\left(\frac{A}{N}\right)^{2i}\]
and therefore:
\[\Pr[B_{e}|F_{A}]\cdot \Pr[F_{A}] = \frac{AB}{N^{2}}\sum_{i=0}^{+\infty}\left(\frac{A}{N}\right)^{2i}\]
Similarly, we get that: 
\[\Pr[A_{e}|F_{B}]\cdot \Pr[F_{B}] = \frac{AB}{N^{2}}\sum_{i=0}^{+\infty}\left(\frac{B}{N}\right)^{2i}\]
Putting everything together:
\[\Pr[E_{v}]= \frac{AB}{N^{2}}\sum_{i=0}^{+\infty}\left(\frac{A^{2i} + B^{2i}}{N^{2i}}\right)\]
Let $A=\alpha N$ with $\alpha \in (0,1)$. We can rewrite $\Pr[E_{v}]$ as follows:
\begin{align*}
    \Pr[E_{v}]&=\frac{\alpha(1-\alpha) N^{2}}{N^{2}} \sum_{i=0}^{+\infty}\frac{\left(\alpha N\right)^{2i}+(1-\alpha)^{2i}N^{2i}}{N^{2i}}\\\\
    &= \alpha (1-\alpha) \sum_{i=0}^{+\infty} \alpha^{2i} + (1-\alpha)^{2i}\\\\
    &= \alpha (1-\alpha) \frac{-2\alpha^2 + 2\alpha +1}{(\alpha - 2)\alpha(\alpha^2-1)}\\\\
    &= \frac{2\alpha^2 -2\alpha -1}{(\alpha +1)(\alpha -2)}
\end{align*}
Let $f(\alpha) = \frac{2\alpha^2 -2\alpha -1}{(\alpha +1)(\alpha -2)}$. We have that $f[(0,1)] = (\frac{1}{2},\frac{2}{3}]$. In conclusion, the worst case bias of the bit extracted through Process \ref{alg:proc-1} is $\frac{2}{3}$, which occurs when there are an equal number of two types.
\end{proof}


In all of our processes and applications we naturally assume that each input item is represented by a vector ${\bf x} = (x_1, \ldots , x_d)  \in {\mathbb R}^d$ for some $d \geq 1$. This allows us to strictly order distinct (i.e., not identical) items. That is, ${\bf x} < {\bf y}$ if and only if $x_i = y_i$ for $i < k$ and $x_k < y_k$ for some $k$ with $1 \leq k \leq d$. 
Under the assumption that all input items are distinct (and hence there exists a global ordering),  we are able to extract an unbiased bit simply by comparing the two first items to arrive (Algorithm \ref{alg:proc-2}). 
Notice that we can repeat this Process \ref{alg:proc-2} for the next pair of online items. Generally, given $2N$ online items we can extract $N$ unbiased bits.


\begin{algorithm}
\caption{Process 2: Distinct-Unbiased}\label{alg:proc-2}
\begin{algorithmic}
\State On the arrival of $I_1,I_2$:

\If{$I_1 < I_2$}
    \State \Return 1
\Else
    \State \Return 0
\EndIf
\end{algorithmic}
\end{algorithm}
\begin{theorem}
    Under the assumption that all input items are distinct (and hence there exists a global ordering among all items in the input instance), Process \ref{alg:proc-2} (Distinct-Unbiased) produces an unbiased bit.
\end{theorem}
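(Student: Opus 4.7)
The plan is to reduce the claim to a direct symmetry argument about uniformly random permutations. Under the ROM model, the input items arrive in an order drawn uniformly from the $N!$ permutations of the (multi)set of items. Because we are assuming all $N$ input items are distinct, the vector order $<$ defined on $\mathbb{R}^d$ (lexicographic in the components $x_1, \ldots, x_d$) restricts to a strict total order on the input. In particular, for any two distinct items $a, b$, exactly one of $a < b$ or $b < a$ holds.

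First, I would observe that the marginal distribution of the (ordered) first two arrivals $(I_1, I_2)$ under a uniformly random permutation is uniform over the $N(N-1)$ ordered pairs of distinct items. This follows because each of the $N!$ permutations is equally likely, and each ordered pair $(a,b)$ with $a \neq b$ extends to exactly $(N-2)!$ permutations.

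Next, I would pair up the ordered pairs via the involution $\phi(a,b) = (b,a)$. This is a bijection on the set of ordered pairs of distinct items that maps the event $\{I_1 < I_2\}$ onto the event $\{I_1 > I_2\}$, and the two events are disjoint and exhaust the sample space (since items are distinct, $I_1 = I_2$ is impossible). By the uniformity of the marginal distribution from the previous step, $\phi$ preserves probability, so
\[
\Pr(I_1 < I_2) = \Pr(I_1 > I_2) = \tfrac{1}{2}.
\]
Since Process~\ref{alg:proc-2} outputs $1$ precisely when $I_1 < I_2$, the output bit $b$ satisfies $\Pr(b=1) = \tfrac{1}{2}$, i.e.\ $b$ is unbiased.

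The argument is essentially a one-line symmetry observation, so there is no real obstacle; the only thing to be careful about is to justify that the distinctness assumption is exactly what is needed to rule out ties and guarantee that the involution $\phi$ has no fixed points that fall outside the $\{I_1 < I_2\} \cup \{I_1 > I_2\}$ dichotomy. Once distinctness is in hand, the uniformity of random permutations does all the work.
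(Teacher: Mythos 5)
Your proof is correct, but it follows a different route than the paper's. The paper argues by conditioning: it sorts the items as $I_1 < I_2 < \cdots < I_N$, conditions on which item arrives first, observes that if the $i$-th smallest item arrives first then the second arrival is larger with probability $\frac{N-i}{N-1}$, and evaluates the resulting sum $\frac{1}{N}\sum_{i=1}^{N}\frac{N-i}{N-1}$ to get $\frac{1}{2}$. You instead avoid any computation by a symmetry argument: the first two arrivals are uniform over ordered pairs of distinct items, the swap map is a probability-preserving involution exchanging the events $\{I_1 < I_2\}$ and $\{I_1 > I_2\}$, and distinctness rules out ties, so each event has probability $\frac{1}{2}$. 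Both arguments are sound and rest on the same underlying fact (uniformity of the random permutation); yours is arguably cleaner and more robust, since it works verbatim for any exchangeable arrival order and for any fixed pair of positions (which is relevant to the paper's later remark that repeating the process on successive disjoint pairs yields $N$ unbiased bits from $2N$ distinct items), whereas the paper's explicit summation is more in the computational style it uses for the biased-bit analyses of Process 1 and \texttt{COMBINE}. One small point you handle correctly but should keep explicit: the lexicographic order on $\mathbb{R}^d$ is a strict total order only on \emph{distinct} vectors, which is exactly where the distinctness hypothesis enters to guarantee the dichotomy has no fixed points.
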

\begin{proof}
    Let $I_1, I_2, ..., I_N$ be all the items in the instance such that $I_1 < I_2 <...<I_N$. Let $I_a, I_b$ be the first two items that arrive. Let $E$ denote the event that $I_a < I_b$. Let $F_i$ denote the event that item $I_i$ arrives first, and $S_i$ denote the event that $I_{i} < I_b$. We have that:
    \begin{align*}
        \Pr[E]&=\Pr[S_1 | F_1]\cdot \Pr[F_1]+ \dots + \Pr[S_N | F_N]\cdot \Pr[F_N]\\\\
        &=\frac{1}{N}\frac{N-1}{N-1} + \dots + \frac{1}{N}\frac{N-N}{N-1}\\\\
        & = \frac{1}{N}\sum_{i=1}^{N}\frac{N-i}{N-1} = \frac{1}{2}
    \end{align*}
\end{proof}


Motivated by the previous two extraction processes, we arrive at an improved $1$-bit extraction process. Under the mild assumption that there exist at least two distinct items, and we have an ordering among different items, we propose the following process: If the first two input items are identical, we use Process \ref{alg:proc-1}; otherwise we use the ordering between the first two distinct items types to determine the random bit. In particular, the random bit is 1 if the second type is less than the first type in the lexicographical ordering we defined on distinct items.  We call this process \texttt{COMBINE}. We will show that the worst-case bias of the returned bit is $2-\sqrt{2} \approx 0.585$. Let $\mathcal{I}$ be the set of input items, and $\sigma: \mathcal{I} \xrightarrow[]{} \mathbb{N}$ be the function dictating the ordering of different items. Furthermore, let $\mathbb{D}$ be the event where the first two input items are distinct, and $\mathbb{I}$ be the event where the first two items are identical. We begin by proving the following lemma:
\begin{lemma} \label{lem:half-distinct}
    $\Pr[\sigma(I_2) < \sigma(I_1) \;|\; \mathbb{D}] = \frac{1}{2}$, where $I_1,I_2$ are the first two items to arrive online.
\end{lemma}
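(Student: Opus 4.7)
The plan is to prove this by a symmetry argument based on swapping the first two positions in the random permutation. The claim is essentially that, once we condition on the first two arrivals being of different types, the two possible orderings among those types are exchangeable and hence equally likely.

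More concretely, I would first fix notation to distinguish \emph{items} (with distinct identities, say labels $1, \ldots, N$) from \emph{types} (the value returned by $\sigma$, which is shared among identical items). The input distribution is then the uniform distribution over all $N!$ permutations of the labeled items; $I_1, I_2$ denote the items that land in positions $1$ and $2$ respectively, and the event $\mathbb{D}$ is $\{\sigma(I_1) \neq \sigma(I_2)\}$.

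Next, I would introduce the involution $\phi$ on permutations that swaps the entries in positions $1$ and $2$ and leaves the remaining positions fixed. Three observations drive the argument: (i) $\phi$ is a measure-preserving bijection on the set of all permutations, so it preserves the uniform distribution; (ii) $\phi$ preserves the event $\mathbb{D}$, because swapping positions $1$ and $2$ does not change the unordered pair of types appearing in those positions; and (iii) $\phi$ interchanges the events $A = \{\sigma(I_1) < \sigma(I_2)\}$ and $B = \{\sigma(I_2) < \sigma(I_1)\}$. From (i)--(iii) we get $\Pr(A \cap \mathbb{D}) = \Pr(B \cap \mathbb{D})$. On $\mathbb{D}$, the ordering between $\sigma(I_1)$ and $\sigma(I_2)$ is strict, so $A$ and $B$ partition $\mathbb{D}$; hence each has probability $\tfrac{1}{2}\Pr(\mathbb{D})$, and dividing through yields $\Pr(B \mid \mathbb{D}) = \tfrac{1}{2}$.

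There is not really a hard step here; the main thing to be careful about is the distinction between items and types, since the paper allows several input items to be identical. If one wanted to avoid the involution and give a direct calculation, an equivalent route is to sum over unordered pairs of types $\{u,v\}$ with $u\neq v$: conditional on the multiset of types of $I_1,I_2$ being $\{u,v\}$, the two orderings are equiprobable by symmetry, so the unconditional conditional probability (after mixing over such pairs) remains $\tfrac{1}{2}$. I would present the involution proof since it is shorter and sidesteps the need to analyze each pair of types separately.
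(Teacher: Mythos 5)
Your proof is correct, but it takes a genuinely different route from the paper's. The paper argues by explicit computation: writing $C_1,\dots,C_d$ for the multiplicities of the $d$ types, it evaluates $Pr(\sigma(I_2)<\sigma(I_1))=\sum_{i}\sum_{j<i}\frac{C_i}{N}\frac{C_j}{N}$ and $Pr(\mathbb{D})=\sum_{i}\sum_{j\neq i}\frac{C_i}{N}\frac{C_j}{N}$ under the stated simplification that $N$ is very large and one is ``sampling from an infinite population'' (so the first two arrivals are treated as independent draws with replacement), and then observes that the double sum for $Pr(\mathbb{D})$ counts each unordered pair of types twice, giving the ratio $\frac{1}{2}$. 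Your swap involution on the uniform distribution over permutations exploits the same underlying symmetry, but it works \emph{exactly} in the finite, without-replacement ROM model: no large-$N$ approximation is needed, since the map preserves the measure and the event $\mathbb{D}$ while exchanging the two orderings, so $Pr(A\cap\mathbb{D})=Pr(B\cap\mathbb{D})$ holds for every finite instance. In that sense your argument is shorter and slightly stronger than the paper's calculation (whose equalities are only exact in the infinite-population limit, even though the conclusion $\frac{1}{2}$ is exact in general, as your proof shows); your alternative ``sum over unordered pairs of types'' route is essentially the paper's counting observation made exact. The care you take to distinguish labeled items from types is precisely the point where such an argument could otherwise go astray, so either presentation is a valid substitute for the paper's proof.
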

\begin{proof}
    Let $d \leq N$ be the number of distinct items (or types) and let $C_1, C_2,...,C_d$ be the number of copies of each type, with $\sigma(type_i) < \sigma(type_j)$ when $i<j$. As before, we assume that $N$ is very large and that we are sampling from an infinite population. We have that:
    \[\Pr[\sigma(I_2) < \sigma(I_1)] = \sum_{i=1}^{d} \sum_{j = 1}^{i-1}\frac{C_i}{N}\frac{C_j}{N}\]
    Furthermore:
    \[\Pr[\mathbb{D}] = \sum_{i = 1}^{d} \sum_{\substack{j = 1\\ j\neq i}}^{d}\frac{C_i}{N}\frac{C_j}{N} = 2 \sum_{i = 1}^{d} \sum_{\substack{j = 1}}^{i-1}\frac{C_i}{N}\frac{C_j}{N} = 2 \cdot \Pr[\sigma(I_2) < \sigma(I_1)]\]
    This is because every pair $(i,j), i\neq j$ appears twice in the summation of $\Pr[\mathbb{D}]$. In conclusion:
    \[\Pr[\sigma(I_2) < \sigma(I_1) \;|\; \mathbb{D}] = \frac{\Pr[\sigma(I_2) < \sigma(I_1)]}{\Pr[\mathbb{D}]} = \frac{1}{2}\]
\end{proof}
We will now prove the following theorem.
\begin{theorem}
    The bit $b$ returned by \texttt{COMBINE} has a worst-case bias of $2-\sqrt{2}$. 
\end{theorem}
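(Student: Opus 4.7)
I plan to split into the events $\mathbb{I}$ and $\mathbb{D}$, handle $\mathbb{D}$ via Lemma \ref{lem:half-distinct}, and reduce $\mathbb{I}$ to a one-parameter inequality via Jensen. Let $\alpha_1, \alpha_2, \ldots$ denote the fractions of each distinct type (so $\sum_i \alpha_i = 1$) and set $p := \sum_i \alpha_i^2 = Pr(\mathbb{I})$, reusing the infinite-population convention of the preceding proofs. The global plan is to first establish
\[ Pr(b = 0) = \sum_i \frac{\alpha_i^2}{1 + \alpha_i} + \frac{1 - p}{2}, \]
then bound the sum by $p/(1+p)$, and finally maximize $G(p) := \frac{p}{1+p} + \frac{1-p}{2}$ over $p \in [0,1]$.

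For the first step I would condition on which type the coinciding pair $I_1 = I_2$ has. Given $I_1 = I_2$ is of type $i$, Process \ref{alg:proc-1} terminates at the first position $k \geq 3$ carrying a non-$i$ item and returns $0$ iff $k$ is odd; summing the geometric series yields $Pr(b = 0 \mid I_1 = I_2 \text{ of type } i) = 1/(1+\alpha_i)$. Weighting by $Pr(I_1 = I_2 \text{ of type } i) = \alpha_i^2$ and adding the $\mathbb{D}$-contribution $(1-p)/2$ supplied by Lemma \ref{lem:half-distinct} produces the displayed decomposition.

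For the second step, the key observation is that $\phi(x) := x/(1+x)$ is concave on $[0,1]$ (since $\phi''(x) = -2/(1+x)^3 < 0$). I would treat $\{\alpha_i\}$ themselves as the distribution of a random variable $X$ that takes value $\alpha_i$ with probability $\alpha_i$; this is valid because $\sum_i \alpha_i = 1$. Then $\mathbb{E}[X] = p$ and $\mathbb{E}[\phi(X)] = \sum_i \alpha_i^2/(1+\alpha_i)$, so Jensen yields $\sum_i \alpha_i^2/(1+\alpha_i) \leq p/(1+p)$, and hence $Pr(b = 0) \leq G(p)$.

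For the final step, solving $G'(p) = 1/(1+p)^2 - 1/2 = 0$ gives $p^{\star} = \sqrt{2} - 1$, at which $G(p^{\star}) = 2 - \sqrt{2}$; the sign of $G''$ confirms this is the max on $[0,1]$. A parallel bound using the elementary estimate $\phi(x) \geq x/2$ on $[0,1]$ (equivalent to $x(1-x)/(2(1+x)) \geq 0$) forces $\sum_i \alpha_i^2/(1+\alpha_i) \geq p/2$ and therefore $Pr(b = 1) \leq 1/2 < 2-\sqrt{2}$, so the worst-case bias is indeed at most $2 - \sqrt{2}$. The main subtlety I anticipate is the Jensen step: the natural conditional weights when expanding $Pr(b = 0 \mid \mathbb{I})$ are $\alpha_i^2/p$, but applying Jensen with the \emph{raw} weights $\alpha_i$ is what collapses the type-dependent sum into a function of the single scalar $p$, making the final one-variable optimization tractable.
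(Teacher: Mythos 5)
Your proof is correct, and it reaches the theorem by a genuinely different route than the paper. The paper conditions on the type of the first arriving item, lets $r$ be that type's frequency, and directly maximizes the one-variable expression $\tfrac{1}{2}(1-r)+\tfrac{r}{1+r}$ over $r\in(0,1)$; this suffices because, inside \texttt{COMBINE}, Process \ref{alg:proc-1} only distinguishes ``same as the first item'' from ``different from it''. You instead keep the full frequency vector $(\alpha_i)$, derive the exact unconditional probability $\sum_i \alpha_i^2/(1+\alpha_i)+(1-p)/2$ with $p=\sum_i\alpha_i^2$, and collapse it to one scalar by applying Jensen to the concave map $x\mapsto x/(1+x)$ with the raw weights $\alpha_i$ (rather than the conditional weights $\alpha_i^2/p$), then maximize $G(p)=\tfrac{p}{1+p}+\tfrac{1-p}{2}$ to get $2-\sqrt{2}$ at $p=\sqrt{2}-1$. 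Your version buys two things the paper glosses over: the averaging over the first item's type is made explicit (the paper invokes the unconditional Lemma \ref{lem:half-distinct} inside a computation that is implicitly conditioned on $r$), and you verify the complementary side as well, showing $Pr(b=1)\le \tfrac12$ in your labelling via $x/(1+x)\ge x/2$. A by-product of your Jensen step is that it is tight only when all frequencies are equal, in which case $p=1/d$ and $G(1/d)\le 7/12$; so unconditionally your bound of $2-\sqrt{2}$ is conservative, whereas the paper's $2-\sqrt{2}$ is attained as a conditional worst case when the first item's type has frequency $\sqrt{2}-1$ --- either way the stated worst-case bias holds, which is all the applications need. The paper's approach buys brevity. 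One cosmetic discrepancy: you read Algorithm \ref{alg:proc-1} literally (return $0$ when the first distinct item lands on an odd counter), so your biased outcome is $b=0$ where the paper's write-up attaches the bias to $b=1$; since the two probabilities are complementary and the paper's definition of bias explicitly allows either orientation, this does not affect the theorem.
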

\begin{proof}
    Let the first item to arrive online occur with frequency $\alpha \in (0,1)$, i.e., there are $\alpha N$ copies of that type in the input. Let $\mathbb{E}$ be the event that Process \ref{alg:proc-1} returns $1$. We get that:
    \begin{align*}
        \Pr[b = 1] &= \Pr[b = 1 \cap \mathbb{D}] + \Pr[b = 1 \cap \mathbb{I}]\\
        &= \Pr[\sigma(I_2) < \sigma(I_1) \;|\; \mathbb{D}] \cdot \Pr[\mathbb{D}] + \Pr[\mathbb{E} \; | \; \mathbb{I}] \cdot \Pr[\mathbb{I}]\\
        &= \frac{1}{2}(1-\alpha) + \alpha \cdot \Pr[\mathbb{E} \; | \; \mathbb{I}]
    \end{align*}
    where the last equality follows from Lemma \ref{lem:half-distinct}.\\
    Furthermore:
    \begin{align*}
      \Pr[\mathbb{E} \; | \; \mathbb{I}] &= (1-\alpha) + \alpha^2(1-\alpha) + \alpha^4(1-\alpha) +...\\
      &= (1-\alpha)\sum_{i=0}^{\infty} \alpha^{2i}\\
      &= (1-\alpha)\frac{1}{1-\alpha^2} = \frac{1}{1+\alpha}
    \end{align*}
    Putting everything together:
    \[\Pr[b = 1] = \frac{1}{2}(1-\alpha) + \frac{\alpha}{1+\alpha}\]
    which achieves a maximum value of $2-\sqrt{2}\approx 0.585$ at $\alpha = \sqrt{2}-1 \approx 0.414$. Specifically, we have that $\Pr[b = 1] \in (\frac{1}{2},2-\sqrt{2}]$, which concludes the proof.
\end{proof}

Aside from the processes described above, there is an extensive body of work on randomness extractors contributed by both complexity theorists and cryptographers. We are not aware of any existing deterministic extractors that apply to our input source. 

We considered the extractors discussed in Vadhan \cite{vadhan2012pseudorandomness} and why they either exclude random-order input sources or why the corresponding extractor constructions do not apply to our model (aside from their offline nature.) We also note that the extractors typically studied in the literature are designed to extract many bits (often polynomial in the input size), whereas our goal is to extract only a single bit. Despite this weaker requirement, extracting even a single bit \textit{when needed} with provable guarantees on its bias is a nontrivial task. The closest analog to our extractor is the classical ``von Neumann trick'' that considers 2 biased bits at a time and returns an unbiased bit if the two biased bits are not equal. The von Neumann method cannot guarantee an unbiased bit when needed and can only guarantee an unbiased bit in expectation.

The first sources that Vadhan examines are \textit{IID-bit sources} and \textit{independent-bit sources}, in which the input is a sequence of random variables $X_1,\dots,X_N \in \{0,1\}$ that are independent but biased. In IID-bit sources, all bits have the same nonzero bias, whereas in independent-bit sources different bits may have different biases. In our setting, the adversary is free to determine the multiset of items that are eventually permuted, which may range from $N$ distinct items to a single repeated item. The bits are not independent since each observed item is drawn without replacement from the adversarially specified multiset.

The next class that Vadhan discusses is \textit{Santha-Vazirani sources}, which satisfy the property that there exists some constant $\delta >0$ such that for every $i \in [N]$ and every $x_1,\dots,x_{i-1} \in \{0,1\}$, $\delta \leq \Pr[X_i=1|X_{1}=x_{1}, \dots, X_{i-1}=x_{i-1}]\leq 1-\delta$. Random-order inputs do not satisfy this condition. Indeed, an adversary may conjure a multiset consisting of $N-1$ identical items together with a single distinct item. Once we condition on having observed the unique item (or all the identical items), the remaining item(s) are fully determined. That is, there is some $j \in [N]$ for which $\Pr[X_j=1|X_{1}=x_{1}, \dots, X_{j-1}=x_{j-1}] \in \{0,1\}$. Hence, the $j^{th}$ bit may be completely determined by the previous $j-1$ bits (or by the remaining $N-j$ bits.)


Next discussed are \textit{$k$-sources}, which are distributions having min-entropy at least $k$. Inputs in the random-order model may have min-entropy ranging from $\log N$ (when only one item is distinct) to $\log (N!)=\Theta(N\log N)$ (when all items are distinct), and thus constitute a $\log N$-source. In particular, the entropy of the source depends entirely on the adversary's strategy. It is well known that (universal) deterministic extractors for $k$-sources do not exist, though it may be possible to construct extractors for specific sources. We are not aware of any existing extractors in the literature that produce a random bit when our applications require it. While seeded extractors that require a small number of truly random bits exist, even a small amount of randomness would defeat our goal of derandomizing barely random algorithms. 

\section{Applications of the 1-bit extraction method \texttt{COMBINE}}
\label{sec:applications}

We apply the \texttt{COMBINE} procedure to derandomize a number of $1$-bit barely random algorithms. In particular, we derandomize the general and proportional knapsack algorithms of Han et al. \cite{han2015randomized}, the interval selection algorithms of Fung et al. \cite{fung2014improved} for single-length and monotone instances with arbitrary weights as well as C-benevolent and D-benevolent weighted instances, the unweighted throughput scheduling algorithms of Kalyanasundaram and Pruhs \cite{kalyanasundaram2003maximizing} and Chrobak et al. \cite{chrobak2007online}, and the Fung et al. \cite{fung2014improved} algorithm for single-length throughput instances with arbitrary weights.
We also consider derandomizing the algorithm of Albers \cite{Albers02} for the makespan problem on identical machines, but here our algorithm is not as good as the random-order algorithm of Albers and Janke \cite{AlbersJ21}. Finally, we consider the algorithm of Kimbrel and Saia \cite{KimbrelS00} for the two-machine job shop problem, where we obtain no improvement over what is already known for deterministic algorithms under adversarial order.

With the exception of the application to single-length instances, the weighted interval selection results and the unweighted throughput results are in the {\it real-time with random-order model}, where the set of release times in the set of input items are fixed and a random permutation is then applied to the remaining parameters of each input item. We will precisely specify this real-time model for the interval selection and throughput problems. Following most of the literature, we state our results as ratios greater than 1.

\subsection{Knapsack}

\textbf{Input:}\>\> A sequence of items $((v_1,w_1), \dots, (v_n,w_n))$ arriving online. The value $v_i \in \mathbb{R}_{\geq 0}$ and the weight $w_i \in \mathbb{R}_{\geq 0}$ of item $i$ are only revealed once it arrives. The knapsack has capacity $C$ which is known to the algorithm before any item arrives.\\
\textbf{Output:}\>\> A vector $\bm z = (z_1,z_2,\dots,z_n) \in \{0,1\}^n$, where $z_i=1$ if and only if item $i$ is packed after processing the final item. We study the knapsack problem in two models: one with revoking (i.e., the ability to forever discard any previously packed item(s)) and one without. In the model without revoking, an algorithm must decide $z_i$ upon the arrival of item $i$ (and before the arrival of item $i+1$.) Conversely, the model with revoking allows an algorithm to zero $z_i$ anytime after its arrival. \\
\textbf{Objective:}\>\> To find $\bm z$ such that $\sum_{i=1}^nz_iv_i$ is maximized subject to $\sum_{i=1}^nz_iw_i \leq C$.\\

In the online knapsack problem, an algorithm maintains a knapsack of capacity $C$ and receives items sequentially.
The objective criterion is to maximize the total value that can be packed in the knapsack subject to the capacity constraint $C$. In the ROM, an adversary selects a multiset of items $\{(v_1,w_1),\dots,(v_n,w_n)\}$ to which a uniformly random permutation is applied. The items then arrive online in the order of the permutation.

The knapsack problem is a central problem in complexity theory and the analysis of algorithms, and is also well studied in the online literature beginning with Marchetti-Spaccamela and Vercellis \cite{Marchetti-SpaccamelaV95} who provide a non-constant inapproximation for deterministic online algorithms. This was extended by Zhou, Chakrabarty and Lukose \cite{ZhouCL08} who prove a  randomized $\ln(U/L)$ inapproximation for the general knapsack where $U$ (respectively, $L$) is an upper (respectively, lower) bound on the value density of input items. A $2$-competitive randomized algorithm for the proportional knapsack problem is due independently to B{\"{o}}ckenhauer et al. \cite{bockenhauer2014online} and Han et al. \cite{han2015randomized} who prove that $2$ is the optimal randomized
competitive ratio for randomized online algorithms without revoking. With revoking, Han et al. also provide a barely random $\frac{10}{7}$-competitive algorithm for the proportional knapsack problem.  Our ROM result for the general knapsack problem is based on the Han et al. $2$-competitive randomized online algorithm with revoking. Also very relevant to our work is the random-order algorithms (without revoking) for the knapsack problem which is known as the knapsack secretary problem.  Babaioff et al. \cite{BabaioffIKK07} initiated the study of the knapsack secretary problem and developed a $16e$ competitive algorithm.  This was followed by the $8.06$ competitive algorithm of  Kesselheinm et al. \cite{KesselheimRTV18} and the current best $6.65$ randomized ROM competitive algorithm of Albers, Khan and Ladewig \cite{AlbersKhanLadewig2021}. Like the Garg et al. algorithm for interval selection, all these knapsack secretary algorithms are randomized and require knowledge of the number  of input items. The latter two papers extend their results (with somewhat worse ratios) to GAP, the generalized  assignment problem. In contrast, we will derive a much simpler deterministic online  algorithm for the knapsack problem using revoking instead of randomness. 
\subsubsection{General Weights}
Han et al. \cite{han2015randomized} study the online (with revoking) knapsack problem in the adversarial setting and give a $2$-competitive randomized algorithm that initially chooses between two different deterministic algorithms (one that greedily maintains the best items in terms of \textit{value-density} $\frac{v_i}{w_i}$ and one that packs the best items in terms of value) uniformly at random. A key point in derandomizing their algorithm is that as long as identical copies of the first item arrive, the two algorithms behave identically, and when a different item arrives, we extract the random bit. 

 \begin{theorem}
     Derandomizing the general knapsack algorithm (with revoking) of Han et al. \cite{han2015randomized} yields a deterministic $\frac{1}{\sqrt2-1}\approx 2.414$-competitive algorithm for the general knapsack problem with revoking in the ROM.
 \end{theorem}
 \begin{proof}
     We denote the algorithm that packs maximum-value items by \proc{Max}, and the algorithm that packs items with highest value-density by \proc{Greedy}. Let \proc{Opt}$(\sigma)$ denote the value of the optimal offline knapsack, and let \proc{Max}$(\sigma)$ and \proc{Greedy}$(\sigma)$ denote the values of the knapsacks obtained by their respective algorithms on a permuted input instance $\sigma$.
    
     The derandomized algorithm begins by accepting as many identical copies of the first item as can fit. When the first distinct item arrives, we extract a bit with a worst-case bias of $2-\sqrt2$. If the extracted bit is $b=1$, then the algorithm continues with \proc{Greedy}; otherwise, it switches to \proc{Max}.

    Let $\mathbb{E}[$\proc{Alg}$(\sigma)]$ denote the expected value of the knapsack constructed by the derandomized algorithm. Han et al. \cite{han2015randomized} show that $\proc{Greedy}(\sigma)+\proc{Max}(\sigma)\geq \proc{Opt}(\sigma)$. Since the derandomized algorithm randomly selects between the knapsacks produced by \proc{Greedy} or \proc{Max} with probability at least $\sqrt2-1$ each, $\mathbb{E}[\proc{Alg}(\sigma)] \geq (\sqrt2-1)(\proc{Greedy}(\sigma)+\proc{Max}(\sigma)) \geq (\sqrt2-1)\proc{Opt}(\sigma)$.
 \end{proof}
 
 \subsubsection{Proportional Weights}
 The proportional knapsack problem is the special case of the knapsack problem where $w_i=v_i$. We consider two models -- one with revoking and another without. In the model without revoking, a simple $2$-competitive 1-bit barely random algorithm maintains two buckets, each with capacity $C$. The algorithm attempts to fill the first bucket and places into the second bucket items too large to fit into the first one. Items that cannot fit into either bucket are discarded. It selects between the two buckets with equal probability.

Generally, any strategy that begins by greedily accepting some number of identical items will require the ability to revoke to be $O(1)$-competitive. We prove this in Appendix \ref{appendix:revocations-required}, and also demonstrate that this strategy (with revoking permitted) is $\frac{1}{\sqrt2-1}$-competitive. Unfortunately, $\frac{1+\sqrt5}{2} \approx 1.618$ is the tight deterministic bound in the adversarial model with revoking. Hence, this derandomization (even with revoking power) fails to make good use of the random-order assumption.

We next derandomize the $\frac{10}{7}$-competitive algorithm of Han et al. \cite{han2015randomized} for the proportional knapsack problem with revoking. The algorithm assigns to each item a weight class and maintains two deterministic algorithms that have different acceptance criteria for different classes of weights. Our derandomization strategy is as follows: before extracting a bit, pack as many identical copies of the first item as possible. Then, pivot to the algorithm of Han et al. This derandomization is $\frac{5}{3\sqrt2-1}\approx 1.542$-competitive, with only a slight loss in the ratio from the bias in the bit.

\begin{theorem}
     Derandomizing the proportional knapsack algorithm (with revoking) of Han et al. \cite{han2015randomized} yields a deterministic $\frac{5}{3\sqrt2-1}\approx 1.542$-competitive algorithm for the proportional knapsack problem with revoking in the ROM.
\end{theorem}
\begin{proof}
    See Appendix \ref{appendix:han-prop-knapsack} for the analysis of the derandomization. In particular, the derandomized algorithm is $\frac{5}{3\sqrt2-1}\approx 1.542$-competitive which improves upon the deterministic impossibility bound of $\frac{1+\sqrt5}{2} \approx 1.618$ in the adversarial model. 
\end{proof}

\subsection{Interval Selection}\label{sec:intervals}
\textbf{Input:}\>\> A sequence of intervals $\mathcal{J} = (J_1, \dots, J_n)$ arriving online. Each interval is represented by a triple $(r_i,p_i,w_i) \in \mathbb{R}_{\geq 0}^3$ where $r_i$ is the release time when the interval starts, $r_i + p_i$ is the time when the interval finishes, and $w_i$ is the weight of the interval. We assume $p_i,w_i>0$ for all $i$ and we assume that an interval can start when another interval finishes. Let $d_i=r_i+p_i$ denote the finishing time of the interval. \\
\textbf{Output:}\>\> A feasible set $S$ of intervals in which intervals do not intersect.\\
\textbf{Objective:}\>\> To find a feasible set that maximizes the total sum of weights $\sum_{i \in S} w_i$ of scheduled intervals.




Interval selection is arguably the most studied special case of the throughput problem (in both the online and offline settings), which we consider in the next section. We refer the reader to the surveys by Kolen et al. \cite{kolen2007interval} and Kovalyov et al. \cite{kovalyov2007fixed} for references and the many applications of interval scheduling.  In the unweighted (respectively, weighted) setting, the goal is to maximize the cardinality (respectively, the sum of the weights)  of the selected subset of intervals. 
 Lipton and Tomkins \cite{lipton1994online} introduced the problem of real-time interval selection. Their paper also introduced the classify and randomly select paradigm. In the traditional online version of the problem, intervals arrive one at a time and the algorithm must either permanently accept an interval or forever discard it.
 Results in this worst case adversarial setting are quite negative. Even in the real-time model where intervals arrive in order of increasing starting times, Bachmann et al. \cite{bachmann2013online} establish an $\Omega(n)$ lower bound for randomized algorithms. We consider the model where a new interval can be accepted, displacing any conflicting intervals currently in the solution. Similar to intervals that are rejected upon arrival, displaced intervals  can never be taken again. In the real-time unweighted setting, Faigle and Nawijn \cite{faigle1995note} consider a simple greedy $1$-competitive deterministic algorithm with revoking. In the same real-time with revoking setting, Woeginger \cite{woeginger1994line} establishes multiple 4-competitive deterministic algorithms for special classes of weight functions that we will discuss. For arbitrary weights, Woeginger (respectively, Canetti and Irani \cite{canetti1995bounding}) show that even in the real-time setting with revoking, there cannot be a constant deterministic (respectively, randomized) competitive ratio. We will consider special classes of weight functions from the work of Fung et al. \cite{fung2014improved} for which they establish randomized 2-competitive algorithms. We describe the derandomization for each of these Fung et al. algorithms in Appendix \ref{appendix:fung} and refer the reader to \cite{fung2014improved} for a complete description of each. 



In interval selection, we can assume the online or the real-time model.  Since interval selection is a special case of the throughput problem, it is often studied in the real-time model. Although it seems inconsistent to simultaneously consider real time arrivals and random order arrivals, there is a meaningful sense in which we can consider {\it real-time intervals in the random-order model}. In the random-order model, we leave $r_i$ fixed and and apply a permutation $\pi$ to obtain the $i^{th}$ interval $(r_i, p_{\pi(i)},w_{\pi(i)})$. In the real-time model the starting points of intervals are fixed, but the remaining attributes of an interval (e.g., processing time and weight) are randomly permuted. Distinct intervals are lexicographically ordered, and two intervals $i$ and $j$ are \textit{pseudo-identical} if $p_i=p_j$ and $w_i=w_j$. For C-Benevolent and D-Benevolent instances, $p_i=p_j$ implies $w_i = w_j$ since the weight of each interval is a fixed function even if the release times are different\footnote{Specifically, an instance is C-benevolent if each weight $w_i=f(p_i)$ for a strictly increasing, continuous function $f$ satisfying $f(0)=0$, $f(p)>0$ for all $p>0$, and $f(\ell_1)+f(\ell_2) \leq f(\ell_1-\epsilon)+f(\ell_2+\epsilon)$ for all $0<\epsilon\leq \ell_1 \leq \ell_2$. An instance is D-benevolent if each weight $w_i=f(p_i)$ for a function $f$ satisfying $f(0)=0$, $f(p)>0$ for all $p>0$ that is non-increasing on $\mathbb{R}^+$.}.  
We consider the single-length, monotone, C-benevolent and D-benevolent interval selection algorithms of Fung et al. \cite{fung2014improved} and demonstrate that each can be derandomized in the ROM. In each instance, our derandomization is $\frac{1}{\sqrt2-1}\approx2.414$-competitive.


\subsubsection{Single-length instances with arbitrary weights}
\label{subsubsection:single-length-derandomization}

For single-length instances, the length (or processing time) is fixed; that is $p_i = p$ for all $i$. We note that for single-length instances, the Fung et al. algorithm applies to the online model (i.e., without the real-time assumption) while all the other interval selection and throughput problems are with respect to the real-time model. We begin by describing the 2-competitive barely random algorithm of Fung et al. \cite{fung2014improved} for single-length intervals with arbitrary weights, and then describe how it generalizes to monotone, C-benevolent and D-benevolent instances.

Fung et al. define \textit{slots} as fixed intervals on the real line. Suppose that the first interval is released at time 0. Slots $s_1, s_2, \dots$ are defined as intervals $[0,p), [p,2p), \dots$ where $p$ is the processing requirement of each interval. 
One deterministic algorithm schedules the heaviest jobs released in odd slots $s_1, s_3, \dots$, while the other algorithm schedules the heaviest jobs released in even slots $s_2, s_4, \dots$. Note that an interval started in $s_i$ may execute into $s_{i+1}$, but never extends into $s_{i+2}$ since processing times are identical. By randomly choosing to take intervals from either odd or even slots, we guarantee a feasible solution that obtains at least one-half the weight of an optimal solution. 

In Theorem \ref{thm:fung-single-length-derand}, we derandomize the algorithm of Fung et al. \cite{fung2014improved} to obtain a $\frac{1}{\sqrt2-1}\approx2.414$-competitive algorithm in the ROM. Our derandomization handles pseudo-identical items that arrive before a random bit can be extracted and then transitions to executing the real-time interval selection algorithm of \cite{fung2014improved}. We also describe derandomizations for the remaining interval selection algorithms (for monotone, C-benevolent, and D-benevolent instances) in Appendix \ref{appendix:fung}.

\begin{theorem}\label{thm:fung-single-length-derand}
     Derandomizing the single-length interval selection (with revoking) algorithm of Fung et al. \cite{fung2014improved} yields a deterministic $\frac{1}{\sqrt2-1}\approx2.414$-competitive algorithm for the single-length interval selection problem (with revoking) in the ROM.
\end{theorem}
\begin{proof}
 The derandomized algorithm begins by greedily scheduling as many non-overlapping pseudo-identical intervals as possible until an interval with a distinct weight arrives. Suppose that the first interval with a distinct weight arrives at time $r_{i+k}$, where interval $i$ is the last interval scheduled greedily. Define the first slot $s_1$ to begin at $r_{i}$, and then execute the algorithm of Fung et al. \cite{fung2014improved}, replacing their fair coin flip with the extracted bit. Intervals $j \in \{i+1,\dots,i+k-1\}$ are released during $s_1$ and have identical weight $w_i$, so discarding these intervals does not affect optimality.

 Since the extracted bit has a worst-case bias of $2-\sqrt2$, the derandomized algorithm selects either slot parity with probability at least $\sqrt2-1$. Note that the resulting schedule is non-conflicting: the selected intervals in $\{1, \dots, i-1\}$ do not conflict because they are identical and are scheduled with our greedy non-overlapping rule, and the selected intervals in $\{i, \dots, n\}$ are chosen by the algorithm of Fung et al.

Let $\proc{Alg}_{\leq i-1}(\sigma)$ denote the total weight of intervals in $\{1, \dots, i-1\}$ scheduled by our algorithm, and let $\proc{Opt}_{\leq i-1}(\sigma)$ denote the optimal offline weight over these intervals. Let $\proc{Opt}_{\geq i}(\sigma)$ denote the optimal offline weight over intervals in $\{i, \dots, n\}$, and let $\mathbb{E}[\proc{Alg}(\sigma)]$ denote the expected total weight of intervals scheduled by the derandomized algorithm. Let $\proc{Opt}(\sigma)$ denote the optimal offline weight over all intervals. Further, for each slot $s_j$, let $W_j$ denote the weight of the heaviest interval released during that slot. We note the following observations:

\begin{enumerate}
    \item $\proc{Alg}_{\leq i-1}(\sigma) = \proc{Opt}_{\leq i-1}(\sigma)$
    \item $\proc{Opt}(\sigma) \leq \proc{Opt}_{\leq i-1}(\sigma)+\proc{Opt}_{\geq i}(\sigma)$
    \item $\proc{Opt}_{\geq i}(\sigma) \leq \sum_{\text{odd } j}W_j+\sum_{\text{even } j}W_j$
\end{enumerate}

Observation (1) follows since the deadlines of the pseudo-identical intervals are non-decreasing and selecting non-conflicting intervals by earliest-deadline-first is optimal offline. Observation (2) follows since the right-hand side of the inequality corresponds to a relaxation that allows overlapping intervals. Observation (3) follows since at most one interval can be scheduled in each slot. Since slots are defined starting from time $r_i$, the algorithm of Fung et al. considers intervals $\{i, \dots, n\}$. Hence,

\begin{align*}
    \mathbb{E}[\proc{Alg}(\sigma)] &\geq (\sqrt2-1)(\proc{Alg}_{\leq i-1}(\sigma)+\sum_{\text{odd $j$}}W_j)+(\sqrt2-1)(\proc{Alg}_{\leq i-1}(\sigma)+\sum_{\text{even $j$}}W_j) \\
    &= (\sqrt2-1)(2\proc{Alg}_{\leq i-1}(\sigma)+\sum_{\text{odd $j$}}W_j+\sum_{\text{even $j$}}W_j) \\
    &\geq (\sqrt2-1)(\proc{Opt}_{\leq i-1}(\sigma)+\proc{Opt}_{\geq i}(\sigma)) \\
    &\geq (\sqrt2-1)\proc{Opt}(\sigma)
\end{align*}

The second inequality follows from Observations (1) and (3), and the third inequality follows from Observation (2).
Hence, the derandomized algorithm is $\frac{1}{\sqrt2-1}\approx2.414$-competitive.
\end{proof}

\subsubsection{Monotone, C-Benevolent and D-Benevolent Weight Functions}
\label{sec:mono-cbenev-dbenev-intervals}

We again wish to maximize the weighted sum of intervals accepted by a single machine. An instance is \textit{monotone} if for every pair of intervals $i$ and $j$, if $r_i<r_j$ then $d_i \leq d_j$. In the context of interval selection, an instance is \textit{C-benevolent} (respectively, \textit{D-benevolent}) if the weight of an interval is a strictly increasing convex function of its length (respectively, a monotonically decreasing function of its length.) Note that for C-benevolent and D-benevolent instances, the processing time determines an interval's weight. Note also that D-benevolent functions include the constant function $f(p) = 1$, so results for D-benevolent functions apply to the unweighted case. 

 Fung et al. \cite{fung2014improved} give 1-bit barely random algorithms that adapt their single-length interval selection algorithm for monotone, C-benevolent, and D-benevolent instances of interval selection in the real-time model. In these settings, even and odd slots are determined at runtime so that at least one of the two deterministic algorithms is eligible to start an interval at any time in each slot. As with single-length instances, we randomly choose between either even or odd slots, but now have to be more careful about an initial sequence of pseudo-identical intervals. These algorithms are 2-competitive and we obtain the following result with a derandomization similar to that for single-length instances. 
 \begin{theorem}
      Derandomizing the interval selection (with revoking) algorithms of Fung et al. \cite{fung2014improved} for monotone instances, as well as instances with C-benevolent and D-benevolent weight functions yields deterministic $\frac{1}{\sqrt2-1}\approx2.414$-competitive algorithms for each corresponding variant of the interval selection problem (with revoking) in the ROM.
 \end{theorem}
 \begin{proof}
      We sketch a derandomization that applies to all three instances in Appendix \ref{appendix:fung}. The proof for each derandomization is structurally similar to the proof of Theorem \ref{thm:fung-single-length-derand}.
 \end{proof}
 

\subsection{Job Throughput Maximization}
\textbf{Single-Machine Throughput Maximization}\\
\textbf{Input:}\>\> A sequence of jobs $\mathcal{J} = (J_1, \dots, J_n)$ arriving online. Each job is represented by a tuple $J_i = (r_i,p_i,d_i,w_i) \in \mathbb{R}^4_{\geq 0}$, where $r_i$ is the release time, $p_i$ is the processing time, $d_i$ is the deadline, and $w_i$ is the weight of the job. We have $d_i \geq r_i$ for all $i$, and $r_1 \leq \dots \leq r_n$. Equivalently, each job can be represented as $(r_i,p_i,s_i,w_i)$ where $s_i = d_i-p_i-r_i$ denotes its slack.\\
\textbf{Output:}\>\> A feasible schedule of jobs on the machine in which every scheduled job begins only after it is released, each completed job $J_i$ executes for at least $p_i$ time units, and no two jobs execute simultaneously.\\
\textbf{Objective:}\>\> To find a feasible schedule that maximizes the total weight of scheduled jobs that complete before their deadlines.

The unweighted job throughput problem is the special case of the job throughput problem for when $w_i = 1$ for all jobs.
Like interval selection, we consider a modified model of random arrivals where the starting points of jobs are fixed, but the remaining attributes of a job (e.g. processing time and slack) are randomly permuted. That is, the $i^{th}$ job to arrive is given by
$(r_i,p_{\pi(i)},s_{\pi(i)},1)$ for a uniformly random permutation $\pi$. The deadline $r_i+p_{\pi(i)}+s_{\pi(i)}$ of the $i^{th}$ job is appropriately adjusted via its slack.

Kalyanasundaram and Pruhs \cite{kalyanasundaram2003maximizing} consider the single processor setting and give a barely random algorithm with a constant (although impractically large) competitiveness for the unweighted real-time throughput problem in the preemption model (with resumption). Baruah et al. \cite{baruah1994line} show that no deterministic real-time algorithm with preemption can achieve constant competitiveness. In Appendix \ref{appendix:KP}, we describe a derandomization of \cite{kalyanasundaram2003maximizing} that is $O(1)$-competitive in the ROM.
\begin{theorem}
     Derandomizing the unweighted throughput (with preemption) algorithm of Kalyanasundaram and Pruhs \cite{kalyanasundaram2003maximizing} yields a deterministic $\approx 311491.491$-competitive algorithm for the unweighted throughput problem (with preemption) in the ROM.
\end{theorem}
\begin{proof}
    See Appendix \ref{appendix:KP}.
\end{proof}

In the model that allows restarting of jobs, Hoogenveen, Potts and Woeginger \cite{HoogeveenPW00} provide a 2-competitive deterministic algorithm for the unweighted throughput problem. He \cite{He2025} shows that in the revoking model, when the slack $s_i$ is large compared to the processing time (namely, $s_i \geq 2p_i$) for each job, there is no constant competitive real-time algorithm for unweighted throughout. 
Chrobak et al. \cite{chrobak2007online} consider a special case of the unweighted throughput problem where all jobs have equal processing time. Their $1$-bit barely random algorithm runs two copies of the same deterministic algorithm that synchronize via a shared lock, leading to asymmetrical behavior and achieving a competitive ratio of $\frac{5}{3}$.
We derandomize their algorithm and only incur a small constant penalty to the competitive ratio due to the bias in the extracted bit; if the bit is unbiased, then we exactly match their $\frac{5}{3}$-competitiveness under random arrivals.
\begin{theorem}
     Derandomizing the equal-length unweighted throughput algorithm of Chrobak et al. \cite{chrobak2007online} yields a deterministic $\frac{5}{2\sqrt2}\leq 1.768$-competitive algorithm for the equal-length unweighted throughput problem in the ROM.
\end{theorem}
\begin{proof}
    The (non-trivial) proof is given in Appendix \ref{appendix:chrobak}. 
\end{proof}

In addition to interval selection, Fung et al. \cite{fung2014improved} consider the single-machine job throughput problem with restarting in which jobs have equal lengths and arbitrary weights. They give a $3$-competitive randomized algorithm that similarly defines slots on the real line and executes two algorithms that each selects the heaviest jobs that can start within slots of a fixed parity. A simple derandomization of their algorithm in the ROM model beats the deterministic lower bound of $4$ in the adversarial model established by Woeginger \cite{woeginger1994line}.

\begin{theorem}
     Derandomizing the equal-length job throughput (with restarting) algorithm of Fung et al. \cite{fung2014improved} yields a deterministic $\frac{3}{2\sqrt2-2}\approx3.621$-competitive algorithm for the equal-length job throughput problem (with restarting) in the ROM.    
\end{theorem}
\begin{proof}
    See Appendix \ref{appendix:fung}.
\end{proof}

\subsection{Makespan Minimization}

\subsubsection{Two-Machine Job Shop}
\textbf{Two-Machine Job Shop Scheduling}\\
\textbf{Input:}\>\> 
All jobs are released at time $0$. Each job consists of a sequence of operations where each operation makes a request for service on exactly one of the two machines. Each operation has a processing requirement whose value is unknown until the operation completes, and subsequent operations are revealed online only after their predecessors complete. The operations of each job alternate between the two machines so that no job requests the same machine in consecutive operations. Initially, only the first operation of each job is revealed to the algorithm.\\
\textbf{Output:}\>\> A (possibly preemptive) feasible schedule of operations on the machines in which each operation begins only after its predecessor completes, runs on its requested machine, and no two operations execute simultaneously on the same machine.\\
\textbf{Objective:}\>\> To find a feasible schedule that minimizes the makespan. \\

In the general job shop environment, we have some number of machines and jobs, each with a sequence of operations that must be processed on specific machines. In the ROM, the processing times of operations are set by the adversary and a uniformly random permutation is applied to these times. As operations are scheduled online, each is assigned the next processing time in the permutation. In this model, an operation may be assigned the processing time of an operation originally belonging to a different job. The remaining constraints (i.e., the machine requirements and the number of operations belonging to each job) are adversarially determined and hence cannot be used for randomness extraction. To the best of our knowledge, no prior work has studied the two-machine job shop problem in the random-order model. 

We motivate this random-order application with the following example: consider a factory that produces different sorts of toys. Each toy passes through the factory in stages that alternate between assembly and quality control. Some toys are assembled entirely in-house, while others receive partially completed components from intermediary suppliers and begin at the quality control stage. Different toys require different amounts of time for assembly and quality control; some pass inspection immediately while some may require extensive rework and similarly assembly times may vary depending on the worker processing the toy. The time required for each stage is unknown until it completes. In what order should toys (and their components) be processed to minimize the time required to complete all toys?

Kimbrel and Saia \cite{KimbrelS00} consider a two-machine job shop that allows preemption, i.e., a machine may interrupt the running operation in favor of executing an operation of another job. The interrupted operation may be resumed later. The optimality criterion is to minimize the makespan. Kimbrel and Saia present a $\frac{3}{2}$-competitive 1-bit barely random algorithm that begins by arbitrarily assigning a priority order to the jobs. One deterministic algorithm runs jobs in priority order on machine 1 and in reverse priority order on machine 2, whereas the other runs jobs in reverse priority order on machine 1 and in priority order on machine 2. Derandomization becomes difficult if the random bit forces the machines to reverse their processing orders.
\begin{theorem}
     Derandomizing the online two-machine job shop (with resuming) algorithm of Kimbrel and Saia \cite{KimbrelS00} yields a deterministic $10-5\sqrt2\leq 2.929$-competitive algorithm for the online two-machine job shop problem (with resuming) in the ROM.
\end{theorem}
\begin{proof}
See Appendix \ref{appendix:KimbrelSaia}.
\end{proof}

Here, our loss in the competitive ratio is due both to the bias in the bit and to the cost of reversing the processing order. However, the tight deterministic ratio is $2$ in the adversarial model \cite{KimbrelS00}.

\subsubsection{$m$ Identical Machines}

\textbf{Makespan Minimization on $m$ Identical Machines}\\
\textbf{Input:}\>\> A sequence of jobs $(p_1, \dots, p_n)$ arriving online, where $p_i >0$ is the processing requirement for job $i$. The number of identical machines $m$ is known to the scheduler before any jobs arrive.\\
\textbf{Output:}\>\> An assignment $\sigma : [n]  \to [m]$ where $\sigma(i)=j$ denotes the assignment of job $i$ to machine $j$. $\sigma(i)$ must be decided upon the arrival of job $i$ and cannot be changed thereafter.\\
\textbf{Objective:}\>\> To find an assignment $\sigma$ that minimizes $\max\limits_j \sum\limits_{i:\sigma(i)=j}p_i$. \\

In the ROM, an adversary selects a multiset of jobs with processing times $\{p_1, \dots, p_n\}$ to which a uniformly random permutation is applied. The jobs then arrive online in the order of the permutation. 

Albers \cite{Albers02} provides a $1.916$-competitive 1-bit barely random algorithm for the makespan problem on identical machines. The algorithm initially chooses between one of two scheduling strategies that differ in the fraction of machines that they try to keep lightly and heavily loaded. One strategy tracks the decisions of the other in order to correct for any mistakes it might make. Because these deterministic strategies behave differently on a stream of identical inputs, derandomizing this algorithm under random arrivals may incur an additional constant penalty to the competitive ratio in the ROM (beyond loss from the bias.) 

Albers and Janke \cite{AlbersJ21} study the identical-machines makespan problem in the ROM and devise a $1.848$-competitive deterministic algorithm, which our derandomization cannot beat. By relaxing the problem to allow unlimited recourse (i.e., the ability to reassign arbitrarily many jobs)\footnote{This is essentially the offline makespan problem.} we can apply our technique to select one of the two strategies and rearrange jobs to match the current state that the selected strategy would have been in, had it been used to schedule every job. Then, the ROM algorithm will differ only in the probability of selecting a strategy. As a result, it achieves a competitive ratio close to Albers' $1.916$ (some bias may improve the performance of the algorithm, but only up to $1.915$ \cite{Albers02}). That is, our derandomization cannot hope to improve upon the best deterministic upper bound of $1.848$ in the ROM. We provide this derandomization to keep consistent our claim that any 1-bit algorithm that we are aware of can be derandomized with our technique under random-order arrivals. 

If we remain faithful to the original problem and instead assume no recourse then some greedy rule must be used to schedule the initial identical jobs. Unfortunately, the deterministic algorithms of Albers behave very differently even on identical jobs. Therefore, it is implausible to wait for a bit to be extracted to switch between the two without some notion of recourse (which we do not allow.) The competitive ratio of our derandomization suffers from our initial greedy decisions.

\begin{theorem}
     Derandomizing the online identical-machines makespan algorithm of Albers \cite{Albers02} yields a deterministic $1.916(4-2\sqrt2)+1\approx 3.245$-competitive algorithm for the online identical-machines makespan problem in the ROM.
\end{theorem}
\begin{proof}
    See Appendix \ref{appendix:albersmakespan}.
\end{proof}

\section{Going Beyond 1 bit of Randomness}\label{sec:beyond-1-bit}

While our extraction processes are so far limited to a single bit, there are some ideas that can be used for extracting some small number of random bits depending on the application and what we can assume. In the online (i.e. any-order) model, we know from \cite{borodin2023any} that we can extend the Fung et al. algorithm for single-length instances to obtain a classify and randomly select $2k$-competitive algorithm for instances with at most $k$ different interval lengths and arbitrary weights. Furthermore, $k$ does not have to be known initially, and we can randomly choose a new length whenever it first appears. In the random-order model, we can derandomize that algorithm for the special case of two-length interval selection with arbitrary weights under the assumption that the input sequence consists of distinct intervals. However, we note that this distinct items assumption may not be without loss of generality for the algorithms we consider for interval selection.
This way we are able to utilize both Process \ref{alg:proc-1} and \ref{alg:proc-2} and get a $6$-competitive deterministic algorithm. The algorithm would use the unbiased bit from the first two intervals to decide on the length. While working on any length, we would use Process \ref{alg:proc-1} to decide on the slot type.

The above algorithm would not necessarily work in the case of three lengths, because if the first three intervals are of different length, we would not have extracted enough random bits to simulate the $\frac{1}{3}$ coin required for the third length. Alternatively, since we can assume distinct interval items, we are able to use the relative order of the second and third (distinct) items noting that this bit is correlated with the first two intervals but still yields a bit with a $\frac{2}{3}$ bias. We can extend this idea of utilizing the relative order of adjacent distinct items to derive deterministic ROM algorithms for any number of interval lengths $k$ (noting that we would have to approximate probabilities that are not equal to $\frac{1}{2^j}$ for some $j$). The competitive ratio would deteriorate rapidly with increasing $k$.

Also note that our extraction technique is not tied to the ROM and extends naturally to related settings. Recalling that a ROM algorithm implies the same competitive ratio for input items coming from an (even unknown) i.i.d. source, we note that if the source is uniformly distributed over an infinite or sufficiently large support, then with high probability every new input item generates an unbiased bit. Hence, we can derive a deterministic algorithm for weighted interval selection for an i.i.d. source uniformly distributed over a sufficiently large support. Of course, as we have noted the approximation will substantially deteriorate when the number of  distinct interval lengths increases.

Again, assuming that we have distinct items, we can easily extract two independent bits by comparing the first and second items, and then the third and fourth items. Unlike the previous application where we may need the second random bit as soon as three items have arrived, in unweighted problems we can afford to not worry about losing a few items. The Emek et al. \cite{emek2016space} online algorithm\footnote{Which operates in the online (i.e., any-order) model, in contrast to the real-time algorithms of Fung et al. \cite{fung2014improved}.} with revoking for unweighted interval selection obtains a 6-competitive algorithm when we randomly chose (with probability $\frac{1}{3}$ for each color) between the colors of a 3-colorable graph. We can waste a fourth color to provide an $8$-competitive ratio and then use two bits to decide between four colors. This shows that a constant competitive ROM result is possible. Recently, Borodin and Karavasilis \cite{BorodinK23} show that the arguably simplest algorithm obtains a 2.5 competitive ROM ratio. While the statement of that algorithm is indeed simple, the algorithm requires some detailed analysis. Our algorithm based on randomness extraction would have established that there is a constant ratio.

\section{Conclusions}
We studied three processes for extracting random bits from uniformly random arrivals, and showed direct applications in derandomizing algorithms for a variety of problems. Our main application results are so far mainly limited to extracting a single random bit. We also considered interval selection (with arbitrary weights) when there are at most two (or $k$) different interval lengths, assuming that all intervals in the input sequence are distinct.  With that assumption, we are able to extract two (or any small number of) biased bits. 

The obvious open question is for what other applications can we obtain constant ratios with and without assuming distinct input items? Are there other applications which only gradually need random bits and not all random  bits initially as often seems necessary in classify and randomly select algorithms? Although the Fung et al. algorithm for single-length instances can be viewed as a classify and randomly select algorithm, we know from \cite{borodin2023any} that random bits do not have to be initially known for instances with a limited number of interval lengths.

 We repeat the question that motivated our interest in online randomness extraction. Namely, is there a natural or contrived problem for which we can provably (or even plausibly) show a better randomized competitive algorithm with adversarial order than what is achievable by deterministic algorithms with random-order arrivals?



\bibliography{bibliography}


\appendix

\newpage

\renewcommand \thepart{}
\renewcommand \partname{}

\part{Appendix}
\parttoc

\newpage

\section{Derandomizing Real-Time Algorithms in the Random-Order Model}

\subsection{Derandomizing the Weighted Interval and Job Selection Algorithms of Fung et al. \cite{fung2014improved}}
\label{appendix:fung}





Fung et al. also considered three special cases of weighted interval selection in the real-time model (with revoking.) Namely, they consider  monotone arrivals with arbitrary weights, as well as instances with C-benevolent or D-benevolent weight functions (see Section \ref{sec:intervals} for formal definitions.) For each case, they present a 1-bit barely random algorithm in which interval acceptance is determined by {\it adaptively} defined  ``slots''. 

We define a phase of the algorithm's execution as the times during which some interval is executing (in the language of scheduling.) A new phase begins whenever the currently executing interval completes before the next interval arrives. Since phases do not intersect, it suffices to consider the competitive ratio in any single phase.

We discuss instances with C-benevolent weight functions; the argument for monotone and D-benevolent instances is similar. At a high level, there are two algorithms $A$ and $B$ that act on alternating slots. Slots are defined adaptively as follows.

Suppose a phase begins with some interval arriving at time $r_0$. If multiple intervals arrive at $r_0$, one algorithm (call it $B$) accepts the longest such interval, which we denote by $I_0 = [r_0,d_0)$. Algorithm $A$ accepts the longest (and hence most valuable) interval $I_1$ that arrives during $(r_0,d_0)$ and completes at time $d_1>d_0$. If there is no such interval $I_1$, the phase ends. When an arriving interval ends at some time $d_1$, a new slot $s_2$ is defined as $[d_0, d_1)$. After $I_0$ finishes, algorithm $B$ will accept the longest interval $I_2$ arriving in $s_2$ and finishing at some time $d_2 > d_1$. Again, if $I_2$ does not exist then the phase ends. Similarly,  slot $s_3$ is defined as $[d_1, d_2)$. 

More generally, for $i >1$, we define slot $s_i = [d_{i-2},d_{i-1})$. The analysis of each phase considers its slots $s_1, s_2, \ldots $. If $i$ is odd, $B$ accepts $I_{i-1}$ while algorithm $A$ accepts the longest interval $I_i$ that arrives in $s_i$ and completes after $d_{i-2}$. For even $i$, the roles of $A$ and $B$ are exchanged. Fung et al. use a random bit to choose between even-numbered slots $s_{2k}$ and odd-numbered slots $s_{2k+1}$. Note that intervals in non-consecutive slots do not interfere with each other.


\begin{theorem}
      Derandomizations of the interval selection (with revoking) algorithms of Fung et al. \cite{fung2014improved} for monotone instances, as well as instances with C-benevolent and D-benevolent weight functions yield deterministic $\frac{1}{\sqrt2-1}\approx2.414$-competitive algorithms for each corresponding variant of the interval selection problem (with revoking) in the ROM.    
\end{theorem}

\begin{proof}
For monotone, C-benevolent and D-benevolent instances, pseudo-identical intervals have equal processing times and weights. As in the derandomized algorithm for single-length intervals with arbitrary weights, we begin by greedily scheduling as many non-overlapping pseudo-identical intervals as possible until an interval with a distinct processing time or weight arrives. Suppose that the first distinct interval arrives at time $r_{i+k}$, where interval $i$ is the last interval scheduled greedily. Define the first slot $s_1$ to begin at $r_i$, and then execute the algorithm of Fung et al., replacing their fair coin flip with the extracted bit.

Let $\proc {Alg}_{\leq i-1}(\sigma)$, $\proc{Opt}_{\leq i-1}(\sigma)$, $\proc{Opt}_{\geq i}(\sigma)$, $\mathbb{E}[\proc{Alg}(\sigma)]$, and $\proc{Opt}(\sigma)$ be defined as in the proof of Theorem \ref{thm:fung-single-length-derand}. To obtain the same $\frac{1}{\sqrt2-1}\approx 2.414$ competitive ratio with the same analysis, it is sufficient to show that:
\begin{enumerate}
    \item $\proc{Alg}_{\leq i-1}(\sigma) =\proc{Opt}_{\leq i-1}(\sigma)$;
    \item $\proc{Opt}(\sigma) \leq \proc{Opt}_{\leq i-1}(\sigma)+ \proc{Opt}_{\geq i}(\sigma)$;
    \item $\proc{Opt}_{\geq i}(\sigma)$ is upper bounded by the total weight of intervals in $\{i, \dots, n\}$ selected by $A$ or $B$.
\end{enumerate}

Observations (1) and (2) follow directly from the proof of Theorem \ref{thm:fung-single-length-derand}, while Observation (3) follows from the $2$-competitiveness of the algorithms of Fung et al. The extracted bit selects algorithm $A$ or $B$ with probability at least $\sqrt2-1$ each, yielding the same competitive ratio as in Theorem \ref{thm:fung-single-length-derand}.
\end{proof}

The reader may observe some similarity between the algorithm of Fung et al. and the predecessor chains of Woeginger \cite{woeginger1994line}. We refer the reader to \cite{fung2014improved} for the way in which Fung et al. define slots for monotone and D-benevolent instances.

Fung et al. extend their barely random algorithm for selecting equal-length intervals with arbitrary weights (with revoking) to the job throughput problem for equal-length jobs with arbitrary weights (with restarting.) They present a $3$-competitive 1-bit barely random algorithm that defines slots the same way as their equal-length interval selection algorithm. Both deterministic algorithms for equal-length jobs schedule the heaviest feasible jobs in each slot of the selected parity, aborting lighter jobs that may be running. At the beginning of execution, the randomized algorithm chooses between the two deterministic algorithms uniformly at random. 

We apply our bit extraction technique to the equal-length job throughput algorithm of Fung et al. With an analysis similar to the case of equal-length intervals with arbitrary weights, we obtain a derandomized algorithm that is $\frac{3}{2\sqrt2-2}\approx 3.621$-competitive in the ROM.

\begin{theorem}
     Derandomizing the equal-length job throughput (with restarting) algorithm of Fung et al. \cite{fung2014improved} yields a deterministic $\frac{3}{2\sqrt2-2}\approx3.621$-competitive algorithm for the equal-length job throughput problem (with restarting) in the ROM.    
\end{theorem}
\begin{proof}
The derandomized algorithm schedules non-overlapping pseudo-identical jobs by earliest-deadline-first (equivalently, by earliest release time.) When a distinct job arrives, the algorithm extracts a bit with worst-case bias $2-\sqrt2$ and uses it to select between even or odd slots. Suppose that job $i$ is the last job scheduled greedily, and let $C_i$ denote its completion time (with $C_i=0$ if no jobs have completed.) We define the first slot $s_1$ to begin at time $C_i$, and then proceed with the $3$-competitive algorithm of Fung et al. 

Let $\mathbb{E}[\proc{Alg}(\sigma)]$ denote the expected total weight of jobs scheduled by the derandomized algorithm, and let $\proc {Alg}_{< C_i}(\sigma)$ denote the weight of jobs completed before time $C_i$ by the derandomized algorithm.

First, observe that there exists an optimal schedule that, before $C_i$, schedules the same jobs at the same times as in the schedule of the derandomized algorithm. This follows from an elementary exchange argument on pseudo-identical jobs released before time $C_i$; any job scheduled by the derandomized algorithm prior to $C_i$ is pseudo-identical and has the earliest deadline among all remaining pseudo-identical jobs. Hence, if $\proc{Opt}(\sigma)$ denotes the total optimal offline weight over all jobs in that schedule and $\proc{Opt}_{<C_i}(\sigma)$ denotes the weight of jobs completing before $C_i$ in that schedule, then $\proc{Alg}_{<C_i}(\sigma)=\proc{Opt}_{<C_i}(\sigma)$.



Finally, let $\proc{Opt}_{\geq C_i}(\sigma)=\proc{Opt}(\sigma)-\proc{Opt}_{<C_i}(\sigma)$ denote the weight of the remaining jobs (which are either pending at time $C_i$, or released after $C_i$) in the optimal schedule.

Since the schedules of the derandomized algorithm and the optimal algorithm match up to time $C_i$, both schedules must have the same pending jobs at time $C_i$. If we update the release times of each of these pending jobs to $C_i$, then $\proc{Opt}_{\geq C_i}(\sigma) \leq \frac{3}{2}(\sum_{\text{odd } j}W_j+\sum_{\text{even }j}W_j)$, where $W_j$ denotes the weight of the heaviest job scheduled during slot $s_j$. This follows because the algorithm of Fung et al. is $3$-competitive with an unbiased bit. 


      
The following observations hold.
\begin{enumerate}
          \item $\proc {Alg}_{<C_i}(\sigma) = \proc{Opt}_{<C_i}(\sigma)$;
          \item $\proc{Opt}(\sigma) = \proc{Opt}_{<C_i}(\sigma) + \proc{Opt}_{\geq C_i}(\sigma)$;
          \item $\proc{Opt}_{\geq C_i}(\sigma) \leq \frac{3}{2}(\sum_{\text{odd } j}W_j+\sum_{\text{even }j}W_j)$
      \end{enumerate}

      As before, we apply these observations to obtain the desired competitive ratio.

      \begin{align*}
        \mathbb{E}[\proc{Alg}(\sigma)] &\geq (\sqrt2-1)(\proc{Alg}_{<C_i}(\sigma)+\sum_{\text{odd $j$}}W_j)+(\sqrt2-1)(\proc{Alg}_{<C_i}(\sigma)+\sum_{\text{even $j$}}W_j) \\
        &= (\sqrt2-1)(2\proc{Alg}_{<C_i}(\sigma)+\sum_{\text{odd $j$}}W_j+\sum_{\text{even $j$}}W_j) \\
        &\geq (\sqrt2-1)(2\proc{Opt}_{<C_i}(\sigma)+\frac{2}{3}\proc{Opt}_{\geq C_i}(\sigma)) \\
        &\geq \frac{2(\sqrt2-1)}{3}\proc{Opt}(\sigma)
    \end{align*}

    The second inequality follows from Observations (1) and (3), and the third from Observation (2). Hence, the derandomization is $\frac{3}{2\sqrt2-2}\approx3.621$-competitive.
\end{proof}

\subsection{Derandomizing the Equal-Length Throughput Algorithm of Chrobak et al. \cite{chrobak2007online}}
\label{appendix:chrobak}

Chrobak et al. present a $\frac{5}{3}$-competitive 1-bit barely random algorithm in the model that allows no recourse once jobs begin. The competitive ratio of the derandomized algorithm depends only on the quality of the extracted bit. If the bit is unbiased, then our algorithm exactly matches the $\frac{5}{3}$-competitiveness result. Otherwise, we show that the derandomization is $\frac{5}{2\sqrt2}\leq 1.768$-competitive, and that the ratio monotonically decreases as the bias approaches $\frac{1}{2}$. The resulting algorithm is deterministic in the ROM and beats the deterministic lower bound of 2 in the adversarial model.
\subsubsection{Definitions and Notation} We begin by restating some definitions from \cite{chrobak2007online}, adapting them slightly for our analysis. We express each job $J_i=(r_i,p_i,s_i,w_i)$ in terms of its slack and note that we consider the setting where $w_i=1$ for all jobs.

The \textit{expiration time} of job $J_i$ is $x_i=r_i+s_i=d_i-p_i$. At time $t$, job $J_i$ is: \textit{admissible} if $r_i \leq t \leq x_i$, and \textit{pending} if it is both admissible and incomplete. Note that an executing job may also be pending.


A set of jobs $Q$ is \textit{feasible} at time $t$ if every job $J_i \in Q$ can still complete by its deadline $d_i$, and is \textit{flexible} if it is feasible at time $t+p$. If $Q$ is not flexible, then it is \textit{urgent}. A job started at time $t$ is \textit{flexible} if the set of pending jobs at time $t$ is flexible, and \textit{urgent} otherwise. We fix an ordering $\prec$ on any set of jobs such that $J_i\prec J_k$ implies $d_i \leq d_k$, breaking ties arbitrarily but consistently. The \textit{earliest-deadline} (ED) job of a set is a $\prec$-minimal job in the set.

For a schedule $A$, let $S_i^A$ and $C_i^A=S_i^A+p_i$ denote the \textit{start} and \textit{completion times} of job $J_i$. A completed schedule $A$ is \textit{earliest-deadline-first} (EDF) if whenever it starts a job, that job is the ED job among all pending jobs that are eventually completed in $A$. Note that this property can only be verified in hindsight since it depends on the final schedule of $A$. (Thus, $A$ is not obligated to choose the overall ED pending job).

A schedule $A$ is \textit{normal} if: whenever it starts a job, it chooses the ED job from the set of all pending jobs, and is executing some job at all times when the set of pending jobs is urgent.

Any normal schedule is EDF, but not every EDF schedule is normal. An EDF schedule may schedule the ED job among jobs that it will complete instead of the ED job among all pending jobs.  Two schedules $A$ and $A'$ are \textit{equivalent}, if for each time $t$: $A$ starts a job $J_i$ at $t$ if and only if $A'$ starts a job $J_k$ at $t$, and the job chosen by $A$ is flexible if and only if the job chosen by $A'$ is flexible. If both jobs are flexible, then $J_i=J_k$; otherwise, both are urgent and $J_i$ and $J_k$ may differ.

The randomized algorithm of Chrobak et al. executes two processes synchronized via a lock. Both processes follow the same instructions (Algorithm \ref{alg:process}). 

\begin{algorithm}\footnotesize
\caption{The Process Algorithm.}
\label{alg:process}
\begin{algorithmic}
\Procedure{Process}{$P,t$}
\State{$Q_t^P \gets$ set of pending jobs at time $t$, given schedule $P$.}
\If {$Q_t^P \neq \emptyset$}
    \If {$Q_t^P$ is urgent} \State {Execute the ED job in $Q_t^P$.}
    \ElsIf {$Q_t^P$ is flexible and the lock is available}
        \State {Acquire the lock.}
        \State {Execute the ED job in $Q_t^P$.}
        \State {Release the lock upon completion.}
    \Else \State {Idle.}
    \EndIf
\EndIf
\EndProcedure
\end{algorithmic}
\end{algorithm}

If both processes simultaneously request the lock, the algorithm of Chrobak et al. breaks ties arbitrarily. In our derandomized algorithm, we assume that the lock is consistently allocated to one of the two processes (denoted $X$ in the analysis), which we say is given \textit{lock priority}. The randomized algorithm of Chrobak et al. executes two processes in parallel from time $0$. Before either process executes, it flips a coin to decide which process produces the schedule. With a fair coin, this algorithm is $\frac{5}{3}$-competitive. We present a derandomized algorithm in the ROM that simulates their algorithm by replacing the fair coin flip with a bit extracted from the random-order input via \texttt{COMBINE}.

\subsubsection{The Derandomized Algorithm in the ROM} Our algorithm constructs the schedule in two phases. 
\begin{enumerate}
    \item \textit{Phase 1.} The schedule begins by greedily packing jobs pseudo-identical to $J_1$ until the first non-pseudo-identical job (having slack different from $s_1$) arrives. This phase is executed by a single process, unlike the synchronized processes of Chrobak et al., and leaves no idle time unless there are no pending jobs.
    \item \textit{Phase 2.} The remaining jobs are scheduled using the algorithm of Chrobak et al., with the coin flip simulated using the bit extracted via \texttt{COMBINE}. We switch to this phase once the first distinct job arrives.
\end{enumerate}

\begin{algorithm}\footnotesize
\caption{A helper procedure used in \proc{Rom-Simulation}.}\label{alg:begin-synch-helper}
\begin{algorithmic}
\Procedure{Begin-Synchronized}{$b, t, S$}
\State Begin the synchronized processes at time $t$, given $S$. It is possible that $t$ is a time strictly
\State less than $p$ units in the past. If this is the case, then no jobs release between $t$ and just
\State before the current time (see Algorithm \ref{alg:romalg}) so use $Q_{t}^S$ to initialize the state of the two 
\State processes. 
\If{$b=1$}
    \State Execute the process that is given lock priority in tie-breaking and simulate the other.
\Else
    \State Execute the process that is not given lock priority in tie-breaking and simulate the
    \State other.
\EndIf
\EndProcedure
\end{algorithmic}
\end{algorithm}

\begin{algorithm}\footnotesize
\caption{The ROM Algorithm.}
\label{alg:romalg}
\begin{algorithmic}
\Procedure{Rom-Simulation}{}
\State{Let $S$ denote the schedule that this algorithm will iteratively construct.}
        \If{only jobs pseudo-identical to $J_1$ have released so far}
        \State{$Q_t^S\gets$ set of pending jobs at time $t$, given the schedule $S$.}
        \If{$S$ is idle at time $t$ and $Q_t^{S}\neq \emptyset$}
        \State {Execute the ED job in $Q_t^S$.}
        \EndIf
\Else
\If {the first job not pseudo-identical to $J_1$ releases at time $r$}
\State {Extract a bit $b$ with worst-case bias $2-\sqrt2$ using \texttt{COMBINE}.}
\If {$S$ is idle at time $r$}
\State {$B \gets r$}
\State \Call{Begin-Synchronized}{$b,B,S$}
\Else 
\State {$B \gets $ start time of the currently executing job.}
\State {$Q_{B}^S \gets $ pending jobs at time $B$, given $S$.}
\If {$b=1$, \textbf{or} $b=0$ and $Q_{B}^S$ is urgent}
\State \Call{Begin-Synchronized}{$b,B,S$}
\ElsIf {$b=0$ and $Q_{B}^S$ is flexible} 
\State {Let $J_k$ denote the currently executing job. We introduce a special copy}
\State{$J_{k'}$ of $J_k$ with identical release time, deadline, and processing time, with}
\State{the property that when scheduled, it forces the machine to idle for the}
\State{duration of its execution.}
\Statex
\If{$Q_t^S$ is flexible at all times $t \in [B,B+p)$}
\State {Add $J_{k'}$ to $Q_{B+p}^S$ if it is still pending at time $B+p$.}
\State \Call{Begin-Synchronized}{$b,B+p,S$}
\ElsIf{$Q_t^S$ first becomes urgent at some time $t \in (B,B+p)$}
\State {Add $J_{k'}$ to $Q_{t+p}^S$ if it is still pending at time $t+p$.}
\State {Let $J_\ell$ denote the ED job in $Q_t^S$.}
\State {Remove $J_\ell$ from $Q_{t+p}^S$ if it is still pending at time $t+p$.}
\Statex
\State {Idle on $[B+p,t+p)$.}
\State \Call{Begin-Synchronized}{$b,t+p,S$}
\EndIf
\EndIf
\EndIf
\EndIf
\EndIf
\EndProcedure
\end{algorithmic}
\end{algorithm}

In the derandomized algorithm (Algorithm \ref{alg:romalg}), we let $S$ denote the schedule produced by the real machine. Let $X$ (respectively, $Y$) denote the schedule produced by Algorithm \ref{alg:romalg} when $b=1$ (respectively, $b=0$.) When Phase 2 begins, $S$ (and thus $X$ and $Y$) may be executing a job. Depending on the state of the pending queue at the time when the transition occurs, it may be that process without lock priority would have preferred to idle instead. A primary technical challenge is accounting for this without interrupting the running job, and the remaining task is to argue that the schedule produced by the derandomized algorithm has sufficient structural properties to inherit the analysis of \cite{chrobak2007online}.

Our algorithm illustrates a general method for derandomizing 1-bit barely random algorithms. In the first phase, greedily process pseudo-identical inputs until the first distinct input arrives. Then extract a bit and begin a fresh execution of the original 1-bit algorithm in the second phase. The analysis of the 1-bit algorithm applies to the remaining input. Ideally, the greedy phase extends an optimal solution, allowing us to recover the original competitive ratio when the extracted bit is fair. 

\subsubsection{Analysis} We begin by observing that the greedy packing in Phase 1 is optimal when all jobs are pseudo-identical (in which case no bit can be extracted by \texttt{COMBINE}.)

\begin{lemma}\label{lem:packingopt}
Phase 1 is optimal for sequences containing only pseudo-identical jobs. 
\end{lemma}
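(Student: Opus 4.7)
The plan is to prove the lemma by an exchange argument that exploits the full interchangeability of identical jobs. Since every job shares processing time $p$ and slack $s$, after sorting by release time we have $r_1\le r_2\le \cdots \le r_n$; deadlines $d_i=r_i+p+s$ and expiration times $x_i=r_i+s$ are non-decreasing, and feasibility of $J_i$ at a slot $[t,t+p)$ reduces to the single condition $r_i\le t\le r_i+s$. Let $G$ be the schedule produced by Phase 1; by its definition $G$ never idles while a pending job exists and always starts the earliest-released pending job, which under identical $p,s$ coincides with the ED job. Let $OPT$ be any optimal feasible schedule with $|G|=k$ and $|OPT|=m$; I aim to show $k\ge m$.

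First I would establish a \emph{re-assignment lemma}: given any feasible schedule occupying time-ordered slots $T_1<T_2<\cdots<T_m$ (with $T_j+p\le T_{j+1}$) with assigned jobs $J_{\pi(1)},\ldots,J_{\pi(m)}$, the jobs can be reassigned to the same slots so that the $j$-th smallest of the release times $\{r_{\pi(1)},\ldots,r_{\pi(m)}\}$ occupies slot $T_j$. Writing $i_j$ for the index attaining this $j$-th smallest release time, the required inequalities $r_{i_j}\le T_j\le r_{i_j}+s$ follow by counting: at least $j$ of the originally assigned jobs satisfy $r_{\pi(\ell)}\le T_\ell\le T_j$ (those in slots $T_1,\ldots,T_j$), and at least $m-j+1$ satisfy $r_{\pi(\ell)}\ge T_\ell-s\ge T_j-s$ (those in slots $T_j,\ldots,T_m$). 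This step is precisely where the common value of $s$ across all jobs enters, since without it the lower bound $r_{\pi(\ell)}\ge T_\ell-s$ does not propagate to $T_j-s$.

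Applying the re-assignment lemma to $OPT$ yields an equivalent optimum $OPT^{\ast}$ whose selected jobs appear in release-time order. I would then compare $OPT^{\ast}$ to $G$ inductively: for each $j\le\min(k,m)$, the $j$-th job completed by $G$ has index at most that of the $j$-th job of $OPT^{\ast}$ and completes no later, the base case being immediate since $G$ starts $J_1$ at $r_1$ and the inductive step using both greedy properties (no idling when a job is pending, earliest-released pending job chosen). If $m>k$, then $OPT^{\ast}$ completes its $(k+1)$-th job at time $T_{k+1}+p$, and the corresponding $J_{i_{k+1}}$ must have been pending for $G$ at some time in $[r_{i_{k+1}},T_{k+1}]$ during which $G$ had exhausted its own selections and was idle or working but still had an available pending job, contradicting the definition of $G$. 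Hence $k\ge m$ and $G$ is optimal. The main obstacle will be the re-assignment lemma — simultaneously verifying both counting inequalities for every $j$ — because the upper bound $T_j\le r_{i_j}+s$ is the fragile side and uses the common-slack assumption in an essential way; the subsequent greedy-stays-ahead induction is routine once the re-ordering puts $OPT^{\ast}$ into release-time order.
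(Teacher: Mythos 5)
Your approach is considerably more explicit than the paper's, which disposes of this lemma in one line by appealing to ``an elementary shifting argument'' plus the observation that Phase 1 never idles while a job is pending. Your re-assignment lemma is correct and is proved correctly: the counting bounds $r_{i_j}\le T_j$ and $T_j\le r_{i_j}+s$ do follow exactly as you say, and this is indeed the only place the common slack is needed; sorting the selected jobs by release time into the same slots is a clean way to normalize $OPT$.

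The gap is in the final step. You conclude that if $m>k$ then $J_{i_{k+1}}$ ``must have been pending for $G$ at some time \ldots during which $G$ \ldots was idle or working but still had an available pending job, contradicting the definition of $G$.'' The second alternative is not a contradiction: $G$ is perfectly allowed to be working while other pending jobs wait, so nothing is violated there. More importantly, $J_{i_{k+1}}$ need not be pending at all at such a time, because $G$ may already have completed it as one of its $k$ jobs (it can well have been the earliest-released pending job at some earlier decision point of $G$). In that case the witness to $m>k$ is some other job $i_\ell$ with $\ell\le k$ that $G$ never completes, and that job may simply have expired while $G$ was busy; non-idling alone gives no contradiction, since $G$ is busy throughout $[r_{i_\ell},x_{i_\ell}]$. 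To close this case you must actually carry the ``index at most'' (release-time domination) half of your invariant through the induction, and even that needs care in tie cases where $G$'s $t$-th completed job and $i_{k+1}$ share a release time, which forces a relabelling argument using the interchangeability of identical jobs. So the skeleton (re-assignment lemma plus greedy-stays-ahead) can be made to work, but the inductive finish is not ``routine'' as written; alternatively, once the re-assignment lemma is in hand, the shorter finish is the direct exchange/shifting argument the paper gestures at: transform the normalized optimum slot by slot to agree with $G$, using that $G$ starts the earliest-released (equivalently earliest-deadline) pending job and never idles.
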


\begin{proof}
    The proof is by an elementary shifting argument. Phase 1 leaves no idle time unless no pending jobs remain, and whenever the derandomized algorithm schedules a job, it selects the job with the earliest deadline among all pending jobs.
\end{proof}

The breakpoint $B$ marks the transition from Phase 1 to Phase 2. Next, we show that some optimal solution extends the partial solution produced by Phase 1. Let $G$ denote the prefix of $S$ on times $[0,B)$ (with $B \gets \infty$ if all jobs are pseudo-identical.) 

\begin{lemma}\label{lem:optext}
  There exists an optimal schedule that extends $G$ (i.e., is equivalent to $G$ on the interval $[0,B)$).
\end{lemma}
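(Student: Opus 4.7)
The plan is to prove the lemma by an exchange argument applied to an arbitrary optimal schedule $O^{\ast}$. The first step is to observe that during the interval $[0, B)$ only jobs pseudo-identical to $J_1$ are admissible, since by definition of $B$ the first non-pseudo-identical job has release time $r \geq B$. Therefore every job that $O^{\ast}$ starts inside $[0, B)$ shares the processing time $p$ and slack $s$ of $J_1$.

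Next I would exploit the interchangeability of pseudo-identical jobs: since they share $p$ and $s$, their deadlines differ only by the same shift as their release times, so two such jobs scheduled at times $t<t'$ can always be swapped provided the one moved earlier has been released by $t$. Using this, I would partition $G$'s schedule on $[0, B)$ into maximal \emph{busy periods} (stretches where $G$ runs jobs back-to-back) separated by \emph{idle periods}. The key observation is that during an idle period of $G$ no pseudo-identical job is pending, so no schedule whatsoever can be running one; in particular $O^{\ast}$ must be idle there as well. Within a $G$-busy period of length $k_i \cdot p$, any schedule can complete at most $k_i$ pseudo-identical jobs (each of length $p$), and this upper bound is achieved by $G$.

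I would then argue that optimality forces $O^{\ast}$ to also attain this upper bound in every busy period: if $O^{\ast}$ leaves an idle slot inside a $G$-busy period, we insert a pseudo-identical job, either previously unused or pulled from later in $O^{\ast}$'s schedule, which remains admissible throughout the busy period thanks to the shared slack $s$. This strictly increases $O^{\ast}$'s throughput, contradicting optimality, unless $O^{\ast}$ was already completing the same job later, in which case moving it earlier is a free swap. After these adjustments followed by a final release-order reordering, $O^{\ast}$ packs pseudo-identical jobs back-to-back in each busy period and is idle during each idle period. Because pseudo-identical jobs are interchangeable in pairs, we can relabel so that $O^{\ast}$ starts exactly the same jobs at exactly the same times as $G$ on $[0, B)$, establishing equivalence.

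The hardest step will be the adjustment inside a busy period, since a pseudo-identical job may have enough slack to straddle the boundary at $B$ or to be scheduled entirely after $B$ by $O^{\ast}$. I expect to handle this with a global counting argument: track the multiset of pseudo-identical jobs that $O^{\ast}$ ever completes, and show that redistributing one of them into an empty $G$-slot either consumes no more post-$B$ processor time than before, or frees post-$B$ time that can be reassigned to preserve every other job $O^{\ast}$ completes, keeping the modified schedule both feasible and optimal.
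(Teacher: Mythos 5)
Your high-level plan -- an exchange argument that transforms an arbitrary optimal schedule $O^{\ast}$ into one agreeing with $G$ on $[0,B)$, using the interchangeability of pseudo-identical jobs -- is the same route the paper takes (the paper first dispatches the easy cases where $B=0$ or all jobs are pseudo-identical, then swaps the jobs $OPT$ starts before $B$ into $G$'s positions and fills any leftover idle space with pseudo-identical jobs that $OPT$ runs at or after $B$). However, your ``key observation'' is false as stated: \emph{pending} is relative to a schedule's own completions, so when $G$ is idle all that follows is that $G$ has already finished every admissible job; an arbitrary optimal $O^{\ast}$ may have deferred such a job (its slack permitting) and be running it squarely inside $G$'s idle period. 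Concretely, take two pseudo-identical jobs released at time $0$ with slack $s\geq 3p$ and the first distinct job releasing much later: $G$ is idle on $[2p,B)$, yet an optimal $O^{\ast}$ may run the second job on $[3p,4p)$. This claim is load-bearing for you -- it is what confines $O^{\ast}$'s pre-$B$ work to $G$'s busy periods and forces the back-to-back structure there -- so its failure is a genuine gap: such deferred jobs must themselves be shifted back into $G$'s unused slots by the transformation, a step your argument never performs.

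Two further steps are shaky. The assertion that an inserted job ``remains admissible throughout the busy period thanks to the shared slack $s$'' fails when $s$ is zero or small compared with the period length $k_i\cdot p$: a job is admissible only on $[r_i, r_i+s]$, so you must exhibit a job released early enough, and not yet expired, for the \emph{specific} slot being filled. And the part you defer to a ``global counting argument'' is precisely where the paper's construction is simplest: a leftover pseudo-identical job that $OPT$ starts at or after $B$ was released before $B$ and has expiration time at least $B$, so it can simply be moved into an idle $G$-slot; deleting it from the post-$B$ portion of $OPT$ only frees processor time there, so the rest of $OPT$ is untouched and the number of completed jobs is preserved. As written, then, your proposal follows the right approach but rests on an incorrect structural claim about $O^{\ast}$ and leaves the decisive feasibility bookkeeping unproven.
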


\begin{proof}
If every job is pseudo-identical, then $G$ is optimal by \cref{lem:packingopt}. 

Otherwise, 
let $\proc{Opt}$ be an optimal offline schedule and assume without loss of generality that $\proc{Opt}$ schedules jobs pseudo-identical to $J_1$ by earliest-deadline-first. We transform $\proc{Opt}$ into another optimal schedule as follows. For each job started by $\proc{Opt}$ before $B$, shift it so that the $i^{th}$ such job begins at the same time as the $i^{th}$ job started by $G$. All such jobs are pseudo-identical to $J_1$ since they are released before time $B$. 

If after shifting all such jobs, some $j^{th}$ job beginning before $B$ in $G$ is unmatched, then $\proc{Opt}$ must schedule a job pseudo-identical to $J_1$ at or after $B$. Let $J$ be such a job with earliest deadline and schedule it instead when the $j^{th}$ job in $G$ begins. Repeating this process yields an optimal schedule equivalent to $G$ on $[0,B)$. 

Note that $G$ schedules at least as many jobs as $\proc{Opt}$ before time $B$; otherwise, after matching every job in $G$, $\proc{Opt}$ would start another job before $B$ that $G$ does not, contradicting the eager nature of $G$. 
\end{proof}


For analysis, define a subinstance $J'$ as follows. For each job pseudo-identical to $J_1$ that is pending at time $B$, set its release time to $B$, and include in $J'$ all jobs that release at or after $B$.


\begin{lemma}\label{lem:algext} If $X'$ and $Y'$ denote the schedules produced by the algorithm of Chrobak et al. on $J'$, then the schedules produced by Algorithm \ref{alg:romalg} satisfy $X=G \cup X'$ and $|Y| \geq |G \cup Y'|-1$.
\end{lemma}

\begin{proof}
$X$ and $Y$ match $G$ on $[0,B)$. Hence, we consider their behaviour on times $[B,\infty)$.

If the first job not pseudo-identical to $J_1$ arrives while $S$ is idle, then $Q_{B}^S$ is empty because otherwise Phase 1 would have started a job. Hence, at time $B=r$ where $r$ is the release time of this distinct job, both processes begin with identical pending queues of jobs in $J'$ and the derandomized algorithm will produce $X=G \cup X'$ or $Y=G \cup Y'$.

On the other hand, if the first job not pseudo-identical to $J_1$ arrives at time $r$ as $S$ is executing some other job $J_k$, then $B=S^S_k$ and we consider two cases.

\textbf{Case 1.} If $b=1$, or $b=0$ and $Q_{B}^S$ is urgent, then both processes start $J_k$ at time $B$. The pending queues of both processes are identical at $B$. Hence, executing the selected process from time $B$ onward will yield $X=G \cup X'$ or $Y=G \cup Y'$.

\textbf{Case 2.} If $b=0$ and $Q_B^S$ is flexible, then the selected process would idle at time $B$, whereas the derandomized algorithm has irrevocably started a job $J_k$ at $B$. We consider two subcases, and in each case, make use of a \textit{dummy} job $J_{k'}$ that inherits the release time, deadline, and processing time of $J_k$, but has the property that it will force the machine to idle for the duration of its execution. Let the selected process, had it begun from time $B$, be known as the \textit{model process}.

\textbf{Subcase 2.1.} If there is no time $t \in (B,B+p)$ at which $Q_t^S$ becomes urgent, then the model process idles on $[B,B+p)$ and may schedule $J_k$ in the future. The copy $J_{k'}$ that we add to $Q_{B+p}^S$ ensures that $Q_{B+p}^S$ matches the pending queue of the model process at time $B+p$. Hence, the pending queues match, except that $J_{k'} \in Q_{B+p}^S$ if and only if $J_k$ is pending in the model process at time $B+p$. Therefore, the derandomized algorithm and the model process behave identically after time $B+p$, except possibly when the derandomized algorithm idles while executing $J_{k'}$, whereas the model process executes $J_k$. Hence, $|Y|\geq |G \cup Y'|$ with equality when $Y'$ schedules $J_k$, in which case $Y=G \cup Y'$ by shifting $J_k$ to run in the idle space created by $J_{k'}$.

\textbf{Subcase 2.2.} If there is some time in $(B,B+p)$ where the pending queue of $S$ becomes urgent, let $t \in (B,B+p)$ be the first such time. Let $J_\ell$ denote the job that the model process schedules at time $t$. The derandomized algorithm continues executing $J_k$, which conflicts with $J_\ell$. Hence, once $J_\ell$ completes at time $t+p$, the pending queue of the model process contains $J_k$ if it has not expired, but not $J_\ell$. On the other hand, $Q_{t+p}^S$ contains $J_{k'}$ if it has not expired and removes $J_\ell$ if it is still pending. Hence, $Q_{t+p}^S$ matches the pending queue of the model process at time $t+p$, and both the derandomized algorithm and the model process behave identically after time $t+p$, except possibly when the model process executes $J_k$ while the derandomized algorithm executes $J_{k'}$ and idles. If $Y'$ never schedules $J_k$, then $|Y|=|G \cup Y'|$ since $Y$ replaces $J_\ell$ in $Y'$ with $J_k$. Otherwise, if $Y'$ schedules $J_k$, then the derandomized algorithm executes $J_{k'}$, causing the machine to idle during $[S^{Y'}_k,C_k^{Y'})$ and possibly losing a job. Hence, $|Y| \geq |G \cup Y'|-1$.  

\end{proof}

By Lemma \ref{lem:optext}, let $\proc{Opt}=G \cup \proc{Opt}'$ denote an optimal schedule that extends $G$. Here, $\proc{Opt}'$ is the optimal schedule from time $B$ onward. Since $X,Y$ and $\proc{Opt}$ share the prefix $G$ prior to $B$, all three schedules have exactly the same pending jobs at $B$. Hence, $X', Y'$ and $\proc{Opt}'$ are schedules over $J'$. 

Note that $X'$ and $Y'$ are constructed entirely by Algorithm \ref{alg:process} and hence are normal by construction. We state a lemma and a theorem from \cite{chrobak2007online} without proof, as both are crucial for the following analysis.

\begin{lemma}[Lemma 2.2 \cite{chrobak2007online}]\label{lem:EquivEDF} Let $X$ be a normal schedule for a set of jobs $J$. Let $f: J \to J$ be a partial function such that whenever $f(J_i)$ is defined, $J_i$ is scheduled as flexible in $X$ and $r_{f(J_i)} \leq C_i^X \leq x_{f(J_i)}$. 

There exists an EDF schedule $A$ equivalent to $X$ such that the following hold:
\begin{enumerate}
    \item All jobs $J_{f(J_i)}$ are completed in $A$;
    \item Consider a time $t$ when either $A$ is idle or it starts a job and the set of its pending jobs is feasible at $t$. Then all jobs pending at $t$ are completed in $A$. 
\end{enumerate}
\end{lemma}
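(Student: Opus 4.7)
The plan is to construct the EDF schedule $A$ by walking through $X$'s decision points in chronological order, transforming $X$ piecewise while maintaining equivalence. At each time $t$ when $X$ starts an urgent job, equivalence forces $A$ to also start an urgent job at $t$, so $A$ would copy $X$'s choice (or any job the urgency forces). At times when $X$ starts a flexible job $J_i$, equivalence only requires $A$ to start some flexible job, leaving room to reorder. The key move is that whenever $f(J_i)$ is defined, $A$ commits to subsequently completing $J_{f(J_i)}$; the hypothesis $r_{f(J_i)} \le C_i^X \le x_{f(J_i)}$ supplies exactly the admissible window needed to slot $J_{f(J_i)}$ into $A$ at or after $C_i^X$.

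Concretely, I would maintain as an invariant a commitment set $\mathcal{C}_t$ of jobs that $A$ has promised to complete by time $t$. This set grows in two ways: by the $f$-image of every flexible start, and by any pending job at a moment when $A$ is idle or starts with a feasible pending set (this second rule is what will enforce property~2). At each of $A$'s start times, $A$ picks the earliest-deadline job in $\mathcal{C}_t$ not yet completed, so $A$ is EDF over the set of jobs it completes. Equivalence with $X$ is preserved because the flexible/urgent status at any moment is determined by the pending set, which by induction coincides between $A$ and $X$ at corresponding moments.

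Property~2 is then immediate from the construction: at any time where $A$ is idle or starts with a feasible pending set, every pending job is added to $\mathcal{C}_t$, and the EDF-within-$\mathcal{C}_t$ policy completes all such commitments. The main obstacle is showing that $\mathcal{C}_t$ remains feasible as new commitments are accumulated. Feasibility of $f$-image insertions follows from the window condition together with the slack guaranteed by $X$ being flexible at $J_i$'s start (flexibility means the pending set at time $S_i^X+p$ is feasible under $X$, so there is room to absorb $J_{f(J_i)}$). The feasibility of commitments coming from property~2 is hypothesized. The delicate part is that the two sources of commitments interact, and one must argue inductively that the accumulated commitment set never exceeds processing capacity before any deadline; this is the heart of the proof and requires carefully exploiting both the normality of $X$ (which ensures urgent commitments are already honored by $X$'s own choices) and the monotone growth of flexibility-induced slack as we sweep forward in time.
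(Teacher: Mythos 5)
The paper does not prove this statement at all: it is imported verbatim as Lemma~2.2 of Chrobak et al.\ \cite{chrobak2007online} and explicitly stated ``without proof,'' so there is no in-paper argument to match your attempt against. Judged on its own, your proposal has genuine gaps. First, you have the equivalence definition backwards. Equivalence requires that when both schedules start a \emph{flexible} job at time $t$, the jobs are \emph{identical} ($J_i=J_k$); the freedom to substitute a different job exists only at \emph{urgent} starts. Your construction claims ``room to reorder'' precisely at flexible starts and has $A$ pick the earliest-deadline member of the commitment set $\mathcal{C}_t$ there, which can differ from $X$'s flexible choice and thus break equivalence. Relatedly, your inductive invariant that the pending sets of $A$ and $X$ coincide at corresponding moments is false in general: at urgent starts $A$ may (and in the intended construction does) complete a different job than $X$, after which the pending sets diverge, and one must instead argue that the two (different) pending sets have matching flexible/urgent status and that $X$'s flexible choices remain available to $A$ --- this requires an exchange argument, not the identity of pending sets. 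Also, EDF is defined relative to the set of jobs $A$ \emph{eventually} completes; picking the earliest-deadline job within the current $\mathcal{C}_t$ does not by itself certify that property, since $A$ may later complete jobs outside $\mathcal{C}_t$ with earlier deadlines that were pending at $t$.

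Second, and most importantly, the substance of the lemma --- that the accumulated commitments (the $f$-images of flexible completions together with all jobs pending at idle or feasible-start moments) can simultaneously be honored --- is exactly the part you label ``hypothesized'' and defer as ``the heart of the proof.'' Invoking the window condition $r_{f(J_i)}\le C_i^X\le x_{f(J_i)}$ and the flexibility of the pending set at $S_i^X$ shows each individual insertion is admissible, but it does not bound the interaction of many commitments against capacity before their deadlines, which is where normality of $X$ must actually be used. As written, neither conclusion~1 nor conclusion~2 is established; the proposal is a plausible plan, not a proof.
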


\begin{theorem}[Theorem 3.1 \cite{chrobak2007online}]\label{thm:charge}
    Under the charging scheme of \cite{chrobak2007online}, each job in $X'$ or $Y'$ receives at most $\frac{5}{6}$ units of charge from jobs in $\proc{Opt}'$.   
\end{theorem}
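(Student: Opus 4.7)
The plan is to reduce the problem to reasoning about two EDF schedules via Lemma \ref{lem:EquivEDF}. First, I would apply that lemma to replace $X'$ and $Y'$ by equivalent EDF schedules $\hat{X}$ and $\hat{Y}$ that complete at least as many jobs. Working with EDF schedules means that at any time the currently executing job is the ED one among the pending jobs that are eventually completed, which aligns nicely with OPT-style arguments because OPT itself is EDF without loss of generality. Once both schedules are EDF, every job $J \in OPT'$ can be classified by whether it appears in $\hat{X}$, in $\hat{Y}$, in both, or in neither; the charging scheme of \cite{chrobak2007online} assigns each $J$ a total of $1$ unit of charge, distributed among $\hat{X}$- and $\hat{Y}$-jobs according to this classification.

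Next, I would spell out the distribution. An OPT job that is completed in both processes is charged only to its own copies there, contributing a small per-algorithm-job load. An OPT job completed in exactly one process is charged mostly to itself there, with a residual charge sent to a nearby algorithm job in the other process whose execution window covers the missing copy. An OPT job $J$ completed in neither process is charged against the two algorithm jobs occupying $[r_J, x_J]$ in $\hat{X}$ and $\hat{Y}$ respectively; the existence of these blocking jobs follows from the feasibility of $J$ combined with item 2 of Lemma \ref{lem:EquivEDF} (otherwise $J$ would have been completed). The lock mechanism of Algorithm \ref{alg:process} is essential here: when one process is executing an urgent job, the other either runs some job (possibly holding the lock throughout that interval) or is forced idle, and this invariant prevents a single algorithm job from being overloaded by charges originating in the synchronized period of its neighbor.

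Finally, I would bound the total charge received by a fixed algorithm job $K \in \hat{X}$ (the case of $\hat{Y}$ is symmetric). Using the EDF ordering I would argue that only a bounded set of OPT jobs can point to $K$, namely those whose feasibility window intersects $K$'s execution and whose deadline lies in a small neighborhood of $K$'s, so at most a constant number of OPT jobs contribute. Summing their fractional charges, chosen so that each OPT job distributes exactly $1$ total unit while respecting the $\hat{X}/\hat{Y}$ split, the worst case should evaluate to $\frac{5}{6}$, which matches the overall $\frac{5}{3}$ ratio once we optimize over the choice between $X'$ and $Y'$. The main obstacle is the bookkeeping for OPT jobs that straddle the boundary between flexible and urgent phases in one process while being completed in the other: there the residual charge must be routed across the lock without breaking the per-algorithm-job cap, and it is the synchronization invariant on lock availability that carries this case. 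I would devote the most care to enumerating exactly which ``neighbor'' jobs can co-charge $K$ together with the OPT job that $K$ itself represents, since the tight bound of $\frac{5}{6}$ leaves essentially no slack beyond this case.
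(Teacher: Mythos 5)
The paper does not re-prove Chrobak et al.'s charging theorem; its proof consists of verifying that the cited result applies: $X'$ and $Y'$ are \emph{normal} schedules (they are produced entirely by the \textsc{Process} routine of Algorithm \ref{alg:process}), and $X'$, $Y'$, and $OPT'$ are all schedules of the \emph{same} subinstance $J'$ -- which is precisely what Lemmas \ref{lem:algext} and \ref{lem:optext} were set up to deliver. Your proposal skips this verification, which is the only new content required in this paper's setting, and instead sets out to re-derive the charging argument of \cite{chrobak2007online} from scratch.

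That re-derivation has a genuine gap: you never actually define the charging scheme or carry out the case analysis that produces the constant $\frac{5}{6}$. The classification of $OPT'$ jobs (completed in both processes, in one, or in neither), the routing of residual charges across the lock, and the enumeration of which $OPT'$ jobs can charge a fixed algorithm job are exactly the content of the theorem, and your sketch resolves them with ``the worst case should evaluate to $\frac{5}{6}$, which matches the overall $\frac{5}{3}$ ratio'' -- which is circular, since the $\frac{5}{3}$ ratio is derived \emph{from} the $\frac{5}{6}$ bound. Note also the form of the bound needed in Section \ref{appendix:chrobak-competitive-ratio}: each job of $X'$ or $Y'$ must receive at most $\frac{5}{6}$ units of charge in total over all jobs of $OPT'$ (each $OPT'$ job emitting one unit); your final paragraph targets the right quantity, but the fractional charge amounts and the tight worst-case configuration are left entirely open. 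Either reproduce Chrobak et al.'s full case analysis, or (as the paper does) cite it and discharge the hypotheses: normality of $X'$ and $Y'$, and that $X'$, $Y'$, $OPT'$ all live on $J'$.
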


The proof of Theorem \ref{thm:charge} relies on the fact that $X'$ and $Y'$ are normal schedules. We refer the reader to \cite{chrobak2007online} for the full charging argument.

Let $Y_B$ (respectively, $X_B$) denote the suffix of $Y$ (respectively, $X$) starting at time $B$. Note that $X_B=X'$ by Lemma \ref{lem:algext}. An immediate corollary is that each job in $X_B$ receives at most $\frac{5}{6}$ units of charge from jobs in $\proc{Opt}'$, and each job in $Y_B$ receives at most $\frac{5}{6}$ units of charge from jobs in $\proc{Opt}'$, with the possible exception of a single job in $Y_B$. 

\begin{corollary}\label{cor:charge}
    Under the charging scheme of \cite{chrobak2007online}, each job in $X_B$ receives at most $\frac{5}{6}$ units of charge from jobs in $\proc{Opt}'$. The same holds for $Y_B$, except a single job may receive at most $\frac{10}{6}$ units of charge from jobs in $\proc{Opt}'$.
\end{corollary}
\begin{proof}
    Each job in $X_B=X'$ receives at most $\frac{5}{6}$ units of charge from jobs in $\proc{Opt}'$ by Theorem \ref{thm:charge}. Next, consider the charging scheme for $Y'$. Let $J_k$ and $J_\ell$ be the same jobs as in Algorithm \ref{alg:romalg}. If $Y'$ schedules $J_\ell$, then redirect all charge assigned to $J_\ell$ in $Y'$ to $J_k$ in $Y_B$. The remaining jobs in $Y_B$ receive exactly the same charge as the corresponding jobs in $Y'$. If $Y'$ also schedules $J_k$, then both the charge assigned to $J_\ell$ and the charge assigned to $J_k$ in $Y'$ are redirected to $J_k$ in $Y_B$. Since each job in $Y'$ receives at most $\frac{5}{6}$ units of charge, the job $J_k$ in $Y_B$ may receive at most $\frac{10}{6}$ units of charge from jobs in $\proc{Opt}'$.    
\end{proof}

\subsubsection{Competitive Ratio}\label{appendix:chrobak-competitive-ratio}
\begin{lemma}
    $G \cup X'$ and $G \cup Y'$ are both normal.
\end{lemma}

\begin{proof}
By construction, every job that begins during $G$ has the earliest deadline among pending pseudo-identical jobs. Further, $G$ leaves no idle time when there exist pending jobs to be executed, so it remains busy whenever the pending queue is urgent. The rest of the schedule is constructed entirely by Algorithm \ref{alg:process} and is therefore normal by construction.
\end{proof}

\begin{lemma}\label{lem:factor2}
    For any two normal schedules $A$ and $B$, $\frac{1}{2} \leq \frac{|A|}{|B|} \leq 2$.
\end{lemma}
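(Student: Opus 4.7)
The plan is to show $|A| \le 2|B|$; the reverse inequality $|B| \le 2|A|$ then follows by interchanging the roles of $A$ and $B$, yielding $\tfrac12 \le |A|/|B| \le 2$.

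The strategy is a charging argument. For each $J_i \in A$ we attach a \emph{witness} time in $B$'s schedule at which $B$ is provably busy, and charge $J_i$ to the $B$-job being executed then. For $J_i \in A \cap B$, take the witness time to be $S_i^B$, giving a trivial self-charge to $J_i$. For $J_i \in A \setminus B$, the key observation is that $J_i$ is never completed by $B$, so it belongs to the pending set of $B$ at every $t \in [r_i, x_i]$. The singleton $\{J_i\}$ is infeasible starting from any time $t > x_i - p$, because $J_i$ would then finish strictly after $d_i = x_i + p$. Hence the pending set of $B$ is \emph{urgent} throughout the window $W_i := (\max(r_i, x_i - p),\, x_i]$, and by normality $B$ must be executing some $J_k \in B$ at every time in $W_i$; take the witness $\tau_i := x_i$ and charge $J_i$ to this $J_k$.

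To bound the multiplicity, fix $J_k \in B$. All witness times charging $J_k$ lie inside the length-$p$ interval $I_k^B = [S_k^B, S_k^B + p)$. For the external charges, the defining constraints are: $J_i \in A \setminus B$ has $x_i \in I_k^B$; the $A$-execution intervals $I_i^A$ are pairwise disjoint and of length $p$; admissibility forces $S_i^A \le x_i$; and the EDF-like priority implicit in normality couples the deadlines of competing jobs in $A$ and $B$. A short combinatorial argument combining these four constraints shows that $J_k$ absorbs at most two charges in total (at most one self-charge from $A \cap B$, plus at most one external charge). Summing over $J_k \in B$ then gives $|A| \le 2|B|$.

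The main obstacle is precisely this last counting step: one must rule out the pathological configuration where many $J_i \in A \setminus B$ have witnesses crammed into the single length-$p$ interval $I_k^B$ while their own $A$-intervals $I_i^A$ remain pairwise disjoint but scattered far to the left. The intended resolution is to exploit the EDF behavior of both normal schedules at the relevant moments --- each $J_i$ being pending in $B$ at $\tau_i$ forces $d_k \le d_i$, and (when $J_k \notin A$) each $J_k$ being pending in $A$ at time $S_i^A$ forces $d_i \le d_k$ --- pinning any such family of $J_i$'s to essentially the same deadline as $J_k$; a pigeonhole over the disjoint length-$p$ intervals $I_i^A$ under the admissibility constraint $S_i^A \le x_i$ then produces the promised constant $2$.
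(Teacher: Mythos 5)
There is a genuine gap, and it sits exactly where you flagged it: the claim that each $J_k \in B$ absorbs at most one external charge (hence at most two charges total) is not only unproven in your plan, it is false for the charging scheme you define. With unit processing times, take $J_k=(r=0,\,x_k=10)$, $J_1=(0,\,x_1=10.2)$, $J_2=(0,\,x_2=10.4)$ (deadlines $11,\,11.2,\,11.4$), plus fillers $F_1,\dots,F_7$ where $F_j$ is released at $1.7+j$ with deadline $2.75+j$. One normal schedule $A$ starts $J_k,J_1,J_2$ at times $0,1,2$ (each is the earliest-deadline pending job at its start, the fillers not yet being released) and is afterwards forced onto $F_2,\dots,F_7$; another normal schedule $B$ idles until $2.7$ (its pending set $\{J_k,J_1,J_2\}$ is flexible until $t=7.4$), runs $F_1,\dots,F_7$ back to back on $[2.7,9.7)$ (each filler is, when started, the unique pending job with deadline below $11$), and then starts $J_k$ at $9.7$, during which $J_1$ and $J_2$ expire uncompleted. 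Both schedules satisfy both normality conditions, yet under your scheme $J_k$ receives a self-charge ($J_k\in A\cap B$) plus external charges from both $J_1$ and $J_2$, since $x_1,x_2\in[S_k^B,S_k^B+p)=[9.7,10.7)$: three charges. The lemma itself is not violated ($|A|=9$, $|B|=8$) because the fillers in $B$ absorb few charges, but that is precisely the point: expiration times of jobs in $A\setminus B$ can cluster arbitrarily inside a single length-$p$ execution window of $B$, and the deadline-coupling facts you invoke do not redistribute the excess (note also that a job released after $S_k^B$ need not be pending when $B$ chooses $J_k$, so even the asserted $d_k\le d_i$ can fail in general).

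The paper's proof avoids this trap by never charging $A$ to $B$ directly. It shows that every normal schedule is within a factor $2$ of an optimal schedule $OPT$ on the same instance, from which $\frac12\le|A|/|B|\le 2$ is immediate. The charging there runs from $OPT$ to (an EDF schedule equivalent to) the normal schedule, with witness times being the start times of $OPT$'s jobs; since $OPT$'s jobs occupy pairwise disjoint length-$p$ intervals, at most one witness can land in any execution window, so the multiplicity bound of two (one external charge plus one self-charge) is automatic. The idle case is handled by \cref{lem:EquivEDF}: if the equivalent EDF schedule is idle at an $OPT$ start time, the corresponding job is pending there and hence completed by that schedule, so it can be charged to itself. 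If you want to keep your framework, the cleanest repair is to adopt this $OPT$-intermediary structure (or otherwise choose witnesses that are start times of a non-overlapping schedule) rather than attempt to cap the number of clustered expirations.
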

\begin{proof}
We compare $A$ and $B$ to an optimal schedule $\proc{Opt}$ on the same instance and show, via a charging argument, that each job in either schedule receives at most two units of charge from jobs in $\proc{Opt}$. Each job in $\proc{Opt}$ generates one unit of charge. We combine this with the observation that $|A|, |B| \leq |\proc{Opt}|$ to complete the bound. 

First, construct an equivalent EDF schedule $A'$ using \cref{lem:EquivEDF}. Consider any job $J_{i}$ in $\proc{Opt}$. If $A'$ is running a job $J_k$ at time $S_{i}^{\proc{Opt}}$, then assign one unit of charge from $J_i$ to $J_k$. If $A'$ is idle at time $S_{i}^{\proc{Opt}}$, then $J_{i}$ must be completed or pending in $A'$. If it is pending, then by \cref{lem:EquivEDF}.2, $J_{i}$ will be completed in $A'$. In this case, charge $J_i$ to itself. Each job in $A'$ receives at most 2 units of charge, so by the equivalence of $A$ and $A'$, it follows that $|\proc{Opt}| \geq |A| \geq \frac{1}{2}|\proc{Opt}|$. The same bound for $|B|$ follows by symmetry.
\end{proof}

\begin{theorem}
    The derandomized algorithm, Algorithm \ref{alg:romalg}, is $\frac{5}{2\sqrt2}\leq 1.768$-competitive in the ROM.
\end{theorem}
\begin{proof}

Note that in all cases, $X=G \cup X'$. For ease of exposition, we adopt the terminology introduced in Algorithm \ref{alg:romalg} and Lemma \ref{lem:algext}.

\textbf{Case 1.} If the first job not pseudo-identical to $J_1$ arrives while $S$ is idle or $Q_B^S$ is urgent, then $Y=G \cup Y'$. By Theorem \ref{thm:charge},

\begin{align*}
|\proc{Opt}|=|G|+|\proc{Opt}'| &\leq |G|+\frac{5}{6}(|X'|+|Y'|) \\
&\leq \frac{10}{6}|G|+\frac{5}{6}|X_B| + \frac{5}{6}|Y_B| \\
&= \frac{5}{6}|X| + \frac{5}{6}|Y| \\
&= \frac{5}{6}|G \cup X'|+ \frac{5}{6}|G \cup Y'| = \beta((2-\sqrt2)|G \cup X'|+(\sqrt2-1)|G \cup Y'|)
\end{align*}
for some $\beta>1$. To solve for $\beta$, rewrite the final equality as $\beta=\frac{\frac{5}{6}|G\cup X'|+\frac{5}{6}|G \cup Y'|}{(2-\sqrt2)|G\cup X'|+(\sqrt2-1)|G \cup Y'|}$. Next, set $r=\frac{|G\cup X'|}{|G\cup Y'|}$ and rewrite the bounded ratio as
\begin{align*}
    \beta=\frac{\frac{5}{6}(r+1)}{(2-\sqrt2)r+\sqrt2-1} \leq \frac{5}{2\sqrt2}
\end{align*}

which we maximize over $r \in [\frac{1}{2},2]$ by Lemma \ref{lem:factor2} since both $G\cup X'$ and $G\cup Y'$ are normal. Since $Y=G\cup Y'$ in this case, $\frac{|\proc{Opt}|}{(2-\sqrt2)|G\cup X'|+(\sqrt2-1)|G \cup Y'|} \leq \beta \leq \frac{5}{2\sqrt2} \approx 1.768$.

\textbf{Case 2.} If $Q_{B}^S$ is flexible, then we consider two subcases.

\textbf{Subcase 2.1.} If there is no time $t \in (B,B+p)$ at which $Q_t^S$ becomes urgent, then $|Y| = |G\cup Y'|$ if $Y'$ schedules $J_k$ and $|Y|=|G\cup Y'|+1$ otherwise.

If $Y'$ schedules $J_k$, then the analysis for $|Y|=|G\cup Y'|$ exactly matches Case 1. Otherwise, by Theorem \ref{thm:charge},

\begin{align*}
    |\proc{Opt}| \leq \frac{5}{6}|G \cup X'|+ \frac{5}{6}|G \cup Y'| &= \beta((2-\sqrt2)|G \cup X'|+(\sqrt2-1)|G \cup Y'|) \\
    &= \beta((2-\sqrt2)|X|+(\sqrt2-1)(|Y|-1))
\end{align*}
for some $\beta>1$. Again, we solve for $\beta$ via the first equality. Then,

\begin{align*}
    \frac{|\proc{Opt}|}{(2-\sqrt2)|X|+(\sqrt2-1)|Y|} \leq \frac{|\proc{Opt}|}{(2-\sqrt2)|X|+(\sqrt2-1)(|Y|-1)} \leq \beta
\end{align*}

so the derandomized algorithm is once again $\beta \leq \frac{5}{2\sqrt2}\approx 1.768$-competitive.

\textbf{Subcase 2.2.} If there is some time $t \in (B,B+p)$ where the pending queue of $S$ becomes urgent, then $|Y| = |G \cup Y'|$ if $Y'$ does not schedule $J_k$ and $|Y| = |G\cup Y'|-1$ otherwise. 

If $Y'$ does not schedule $J_k$, then the analysis for $|Y|=|G\cup Y'|$ exactly matches Case 1. Otherwise, by Corollary \ref{cor:charge},

\begin{align*}
        |\proc{Opt}| \leq \frac{5}{6}|X|+ \frac{5}{6}(|Y|+1) &= \frac{5}{6}|G\cup X'|+\frac{5}{6}|G\cup Y'| \\
        &=\beta((2-\sqrt2)|G \cup X'|+(\sqrt2-1)|G \cup Y'|) \\
    &= \beta((2-\sqrt2)|X|+(\sqrt2-1)(|Y|+1))
\end{align*}

As before, solve for $\beta$ via the middle equality. Then,
\begin{align*}
    |\proc{Opt}| \leq \beta((2-\sqrt2)|X|+(\sqrt2-1)(|Y|+1))=\beta((2-\sqrt2)|X|+(\sqrt2-1)|Y|)+\beta(\sqrt2-1)
\end{align*}
where $\beta(\sqrt2-1)=o(|\proc{Opt}|)$.  Hence, the derandomized algorithm is (asymptotically) $\beta \leq \frac{5}{2\sqrt2}\approx 1.768$-competitive.



\end{proof}

\subsection{Derandomizing the Throughput Algorithm of Kalyanasundaram and Pruhs \cite{kalyanasundaram2003maximizing}}
\label{appendix:KP}


In the real-time model with preemption (i.e., jobs may be interrupted and resumed later), Kalyanasundaram and Pruhs \cite{kalyanasundaram2003maximizing} present a barely random algorithm with constant competitiveness. While no deterministic online algorithm can achieve a constant competitive ratio against an adversarial input sequence \cite{baruah1994line}, they show that the optimal offline solution is bounded above by a linear combination of the solutions produced by two deterministic algorithms. 

We begin by presenting the two algorithms that complete the bound. The first algorithm, \proc{Srpt}, always executes the job with the least remaining processing time and preempts jobs as necessary whenever new jobs are released. When multiple jobs have the same remaining processing time, \proc{Srpt} breaks ties arbitrarily. In our derandomization, we assume that \proc{Srpt} breaks ties by selecting the job with the earliest deadline. The second algorithm, \proc{Lax}, is slightly more sophisticated and maintains two stacks: one consisting of jobs that are currently scheduled to run on the machine, and another consisting of jobs that it may consider scheduling. Jobs in the latter stack must, in addition to feasibility, satisfy eligibility constraints depending on their slack. We refer to the former stack as the \textit{promised} stack and the latter as the \textit{pending} stack.

We note that as long as the jobs arriving online are pseudo-identical (i.e., have the same processing time and slack), scheduling feasible jobs in order of nondecreasing release times maximizes the number of these jobs completed by their deadlines by a simple exchange argument. \proc{Srpt} breaks ties by earliest deadline (which is equivalent to ordering by release times for pseudo-identical jobs), and is therefore optimal for scheduling pseudo-identical jobs. On the other hand, the eligibility criteria of the pending stack in \proc{Lax} may lead it to skip some jobs depending on the ratio between the slack and processing time of the pseudo-identical jobs. For example, consider two jobs $J_1$ and $J_2$ with $r_1=0$, $r_2=0.6$, $p_1=p_2=1$, and $s_1=s_2=0.4$. $J_2$ is available at time $1$ when $J_1$ completes and is both started and completed by $\proc{Srpt}$. In contrast, $J_2$ is no longer viable (see Definition 1 in \cite{kalyanasundaram2003maximizing}) for $\proc{Lax}$ at time $1$ and is left unscheduled by $\proc{Lax}$.

We derandomize the 1-bit algorithm that chooses uniformly between \proc{Srpt} and \proc{Lax} by executing \proc{Srpt} until a distinct job arrives. The derandomization guarantees that if the algorithm switches to \proc{Lax}, the final schedule completes at least as many jobs as \proc{Lax} (run from time $0$) would have, and never attempts to schedule a job that \proc{Srpt} has already completed.

\begin{algorithm}[H]\footnotesize
\caption{A ROM Algorithm for the unweighted throughput problem (with preemption.)}
\label{alg:KP-throughput}
\begin{algorithmic}
\State Execute \proc{Srpt} until a distinct job arrives at time $r$. As this happens, simulate the decisions that \proc{Lax} would have made if executed from time $0$. Note that the state of its stacks is fully determined by its decisions on the input. 
\State Extract a bit $b$ with worst-case bias $2-\sqrt2$.
\State
\If {$b=1$}
\State Continue scheduling via \proc{Srpt}. 
\Else
\If {\proc{Lax} is idle at time $r$}
\State Preempt the currently executing job, if it exists.
\State Switch to scheduling via \proc{Lax}.
\Else
\If {\proc{Srpt} is idle at time $r$}
\State Wait until the currently running job in the simulated \proc{Lax} schedule completes or is preempted
\State Switch to \proc{Lax}, keeping the current contents of its stacks. Whenever \proc{Lax} attempts to schedule
\State a copy of a pseudo-identical job released before $r$, idle until that job is preempted or completed. 
\State Follow all other scheduling decisions of \proc{Lax}. 
\Else 
\State Switch to \proc{Lax}, keeping the current contents of its stacks. Whenever \proc{Lax} attempts to schedule 
\State a copy of a pseudo-identical job released before $r$, execute the earliest-deadline copy of that \State job released before $r$ (i.e., the one that released earliest) that is still incomplete. If no such \State job remains, then idle until that job in \proc{Lax} is preempted or completed.
\State Follow all other scheduling decisions of \proc{Lax}. 
\EndIf
\EndIf
\EndIf
\end{algorithmic}
\end{algorithm}

\begin{theorem}
     Derandomizing the unweighted throughput (with preemption) algorithm of Kalyanasundaram and Pruhs \cite{kalyanasundaram2003maximizing} yields a deterministic $\frac{129024}{\sqrt2-1}\approx 311491.491$-competitive algorithm for the unweighted throughput problem (with preemption) in the ROM.
\end{theorem}
\begin{proof}
    When $b=1$, the algorithm executes \proc{Srpt} from time $0$. If $b=0$, then we consider the following cases.

    \textbf{Case 1.} If \proc{Lax} is idle when the distinct job arrives, the algorithm preempts the currently executing job, if it exists, and switches to \proc{Lax}. Since \proc{Lax} is idle, its stacks must be empty by induction: if the promised stack is nonempty then \proc{Lax} would be busy, and the pending stack is empty because one job would have been promoted to the promised stack on the completion of the last job by the stack update rules of \proc{Lax}. If there were no such jobs in the pending stack eligible for promotion, then time only reduces its eligibility, i.e., no jobs that were in the pending stack will re-enter it between the completion of a pseudo-identical job and the release of the next job.
    Hence, jobs that arrive from time $r$ onward are independent of the algorithm's actions before $r$. We can continue scheduling with \proc{Lax} without worrying about the past. Since \proc{Srpt} is optimal on pseudo-identical jobs, it follows that just before time $r$, \proc{Srpt} has completed at least as many jobs as \proc{Lax} (excluding the job that the algorithm might preempt.) This guarantees that the number of jobs that the derandomization completes is at least the number of jobs that \proc{Lax} would have completed had it been run from time $0$. 

    \textbf{Case 2.} If at time $r$ \proc{Lax} is executing while \proc{Srpt} is idle, then the greediness of \proc{Srpt} on pseudo-identical jobs implies that every pseudo-identical job that released before $r$ is either already completed by \proc{Srpt} or has deadline before $r$.
    Therefore, there is an injective correspondence from pseudo-identical jobs released prior to time $r$ that are completed by \proc{Lax}, to pseudo-identical jobs released prior to time $r$ that are completed by \proc{Srpt} before time $r$. In particular, whenever \proc{Lax} later attempts to schedule a pseudo-identical job released before time $r$ (that is still feasible), that job has already been accounted for by the derandomized algorithm. Our algorithm is therefore safe to ignore such requests from \proc{Lax}. For all other times, the scheduling decisions of \proc{Lax} and the derandomization will match. Hence, the number of jobs that our algorithm completes is at least the number of jobs that \proc{Lax} would have completed had it been run from time $0$. 

    \textbf{Case 3.} If at time $r$ both \proc{Lax} and \proc{Srpt} are executing, then they are executing pseudo-identical copies of the first job. By the optimality of \proc{Srpt} on pseudo-identical jobs, \proc{Srpt} has completed at least as many pseudo-identical jobs released before $r$ as \proc{Lax} by time $r$. Since all pseudo-identical jobs have equal processing time, it follows that there are no more total units of processing time that remain in pseudo-identical jobs released before $r$ under \proc{Srpt} than under \proc{Lax}. Also, since the algorithm schedules pseudo-identical jobs by nondecreasing release times, whenever \proc{Lax} schedules a pseudo-identical job released before time $r$, the derandomization must also be eligible to, unless no such jobs remain. The rest of the scheduling decisions (at all times other than when pseudo-identical jobs released before time $r$ are executing) in \proc{Lax} and the algorithm match. Since the algorithm stays ahead on pseudo-identical jobs, the number of jobs that our algorithm completes is at least the number of jobs that \proc{Lax} would have completed had it been run from time $0$. 

    Let \proc{OrigLax} and \proc{OrigSrpt} denote the number of jobs that \proc{Lax} and \proc{Srpt}, respectively, would have completed if they were run on the entire instance. Let \proc{OurLax} and \proc{OurSrpt} denote the number of jobs that our algorithm completes if $b=0$ and $b=1$, respectively. Finally, let $\mathbb{E}[\proc{Alg}]$ denote the expected number of jobs that our algorithm completes, and let \proc{Opt} denote the optimal offline number of jobs that can be completed given the realized ROM instance. 

    Kalyanasundaram and Pruhs \cite{kalyanasundaram2003maximizing} prove that $\frac{\proc{Opt}}{129024} \leq \proc{OrigSrpt}+\frac{108864}{129024} \proc{OrigLax} \leq \proc{OrigSrpt}+ \proc{OrigLax}$. Since the extracted bit has worst-case bias $2-\sqrt2$, the probability of choosing either algorithm is at least $\sqrt2-1$. Using this, we conclude that \[\mathbb{E}[\proc{Alg}] \geq (\sqrt2-1)(\proc{OurSrpt}+\proc{OurLax}) \geq (\sqrt2-1)(\proc{OrigSrpt}+ \proc{OrigLax}) \] so $\mathbb{E}[\proc{Alg}] \geq (\sqrt2-1)\frac{\proc{Opt}}{129024}$. Hence, the derandomization is $\frac{129024}{\sqrt2-1}\approx 311491.491$-competitive under random-order arrivals.
\end{proof}

With a fair bit we match their competitive ratio, while a biased bit incurs a constant multiplicative penalty. Importantly, the ratio is constant even with a biased bit and thus beats the deterministic lower bound under adversarial arrivals.

\section{Derandomizing Knapsack Algorithms in the Random-Order Model}

\subsection{Revocations are Necessary for $O(1)$-competitive Deterministic Knapsack Algorithms}
\label{appendix:revocations-required}

Any deterministic algorithm in the ROM must accept some subset of the initial identical items; otherwise, the adversary could present only identical items and keep the algorithm from packing anything.

The adversary can present a multiset consisting of many identical copies of a small item and a single copy of a large item so that the first item in the permuted sequence is small with high probability. This item must be packed by the ROM algorithm. By carefully choosing the sizes of the items, an adversary can force the ratio between the values of the two knapsacks to be arbitrarily large. The optimal solution accepts the large item while the ROM algorithm is stuck with small items.

\begin{theorem}\label{thm:det-ROM-prop-knapsack}
    Any deterministic ROM algorithm for the (general or proportional) knapsack problem that does not revoke has competitive ratio $\Omega(n)$.
\end{theorem}
\begin{proof}
Let $\proc{Alg}(\sigma)$ and $\proc{Opt}(\sigma)$ denote the values packed by a deterministic ROM algorithm and an optimal offline algorithm on a permuted input $\sigma$. 

Consider an input with $n-1$ small items of weight $\frac{2}{n(n-1)}$ and a single large item of unit weight, with the value of each item equal to its weight.
Suppose that $k\geq 0$ small items arrive before the large item. The probability of the first $k$ items being small is the probability that the large item appears at position $k+1$, which is $\frac{1}{n}$. When $k>0$, any deterministic algorithm must accept some nonempty subset of the first $k$ items (otherwise the adversary could present only identical items and force the algorithm to attain no value.) Therefore, $\mathbb{E}[\proc{Alg}(\sigma)] \leq \frac{1}{n}+\frac{1}{n}\sum_{k=1}^{n-1} k \cdot \frac{2}{n(n-1)}= \frac{2}{n}$, where the first term is the contribution from when the large item arrives first, and the summation accounts for the expected value packed when the first $k \geq 1$ items are small. On the other hand, $\proc{Opt}(\sigma)=1$. Therefore, the competitive ratio $\frac{\proc{Opt}(\sigma)}{\mathbb{E}[\proc{Alg}(\sigma)]} \geq \frac{1}{2/n}=\frac{n}{2}$ is $\Omega(n)$. 
\end{proof}



\subsection{Derandomizing the Non-Revoking Proportional Knapsack Algorithm of Han et al. \cite{han2015randomized} via Revoking}

We continue by describing the derandomization procedure referenced in Section \ref{sec:applications} for the non-revoking algorithm of Han et al. \cite{han2015randomized}.

\begin{algorithm}[H]\footnotesize
\caption{A ROM Algorithm for the Proportional Knapsack Problem.}
\label{alg:worse-ROM-prop-knapsack}
\begin{algorithmic}
\State Pack as many identical copies of the first item as possible. Let $w$ denote the weight of the first item.
\State When an item with a different weight arrives, extract a bit $b$ with worst-case bias $2-\sqrt2$.
\State Let $W$ denote the current weight of the knapsack. 
\State 
\If {$1-W<w$} 
\State Return the current knapsack. \EndIf
\State
\State Begin the algorithm of Han et al. with bin 1 equal to the current knapsack and bin 2 empty.
\If {$b=1$}
\State Return bin 1 from executing the algorithm of Han et al. on the rest of the input, i.e., pack as many
\State of the remaining items as can fit.
\Else
\State Revoke all items currently in the knapsack while continuing to simulate bin 1.
\State Return bin 2 from executing the algorithm of Han et al. on the rest of the input, i.e., if an inbound \State item fits in bin 1, then simulate accepting it in bin 1. If the item is too large to fit in bin 1 but fits \State into bin 2, accept it in bin 2 (the current knapsack.)
\EndIf
\end{algorithmic}
\end{algorithm}

Note that this algorithm packs exactly one of bin 1 or bin 2 from the algorithm of Han et al. Like the ROM derandomizations for interval selection and throughput scheduling, if the extracted bit is fair, we recover the original competitive ratio; otherwise if the bit is biased, the ratio suffers only due to the bias of the bit. 

\begin{theorem}
    Algorithm \ref{alg:worse-ROM-prop-knapsack} is a $\frac{1}{\sqrt2-1} \approx 2.414$-competitive deterministic algorithm (with revoking) for the proportional knapsack problem in the ROM.
\end{theorem}
\begin{proof}
If $1-W<w$, either $w \geq \frac{1}{2}$ or $w < \frac{1}{2}$. If $w \geq \frac{1}{2}$ then $W > 0$ implies that $W\geq w$. Otherwise, $w<\frac{1}{2}$ implies that $W>1-w>\frac{1}{2}$. In either case, the knapsack contains weight at least $\frac{1}{2}$, so the algorithm is $2$-competitive even without considering the rest of the input. 


On the other hand, if $1-W \geq w$, then bin 2 would be empty if we had run the algorithm of Han et al. from the beginning. If $b=1$, we can continue as-is and produce the same knapsack as bin 1 in Han et al.'s algorithm. Otherwise, we revoke all items currently in the knapsack and wait until Han et al.'s algorithm begins packing bin 2. 

Our algorithm chooses between bins 1 and 2 with probability at least $\sqrt2-1$ each. Since the total weight of the items in both bins is either optimal or exceeds the knapsack capacity, the derandomized algorithm is $\frac{1}{\sqrt{2}-1} \approx 2.414$-competitive.    
\end{proof}




\subsection{Derandomizing the Revoking Proportional Knapsack Algorithm of Han et al. \cite{han2015randomized}}
\label{appendix:han-prop-knapsack}

\subsubsection{Definitions and Notation}  We consider an input sequence $\sigma$ consisting of $n$ items and assume without loss of generality that the knapsack has unit capacity. The derandomized algorithm operates in the same model as \cite{han2015randomized}, which allows revocations -- that is, an item may be permanently discarded after being accepted into the knapsack. We begin by restating a few definitions from \cite{han2015randomized}.

The algorithm of Han et al. classifies the items into three groups: an item $i$ is \textit{small} if $w_i \leq \frac{3}{10}$, \textit{medium} if $\frac{3}{10}<w_i<\frac{7}{10}$, and \textit{large} if $w_i \geq \frac{7}{10}$. $S, M$ and $L$ denote the sets of small, medium and large items, respectively. We partition $M$ into four subsets $M_1$, $M_2$, $M_3$, and $M_4$ defined by $M_1 = \{i \in M : \frac{3}{10}<w_i\leq \frac{4}{10} \}$, $M_2 = \{i \in M : \frac{4}{10}<w_i \leq \frac{5}{10} \}$, $M_3 = \{i \in M : \frac{5}{10}<w_i<\frac{6}{10}\}$, and $M_4 = \{i \in M : \frac{6}{10} \leq w_i < \frac{7}{10}\}$. An item $i$ is said to be an \textit{$M_j$ item} if $i \in M_j$.  

$A_1$ and $A_2$ denote the subroutines defined in \cite{han2015randomized}, and $A$ refers to the 1-bit barely random algorithm that chooses between them uniformly at random. $A_1$ is aggressive, preferring to take heavier items first, and $A_2$ is balanced, accepting items from lighter classes first. In both, the arrival of a large item causes every other item to be evicted, and packing terminates. Let $Q$ denote the items currently in the knapsack together with the newly arrived item. We define \textit{round} $i$ (or the $i^{th}$ round) of an algorithm to denote the time between the arrival of the $i^{th}$ item to when the algorithm finishes processing it by readjusting its knapsack. We say that an algorithm \textit{keeps} an item if it is accepted or not discarded by its knapsack during the current round.

$\bm A_1$. On the arrival of a new item, if no $M_4$ item has been encountered so far, then $A_1$ keeps the smallest $M_3$ item in $Q$. Otherwise if an $M_4$ item has been seen, it keeps the smallest $M_4$ item in $Q$. If space allows, $A_1$ then keeps the largest items in $Q$ that keep the knapsack feasible.

$\bm A_2$. On the arrival of a new item, if no $M_4$ item has been seen and some feasible subset of $Q$ has total weight at least $\frac{9}{10}$, then $A_2$ stops packing and removes items from the knapsack so that it matches that subset. If some $M_4$ item has been observed and some feasible subset of $Q$ has total weight at least $\frac{8}{10}$, then $A_2$ stops packing and likewise adjusts the knapsack so that its contents match the chosen subset. Otherwise, $A_2$ keeps the smallest $M_2$ item and then the smallest $M_1$ item from $Q$. It then keeps the smallest medium items in $Q$, and if space allows, the heaviest small items in $Q$.

 The algorithm flips a coin to choose between $A_1$ and $A_2$. When the coin is fair, the resulting algorithm is $\frac{10}{7}$-competitive. A detailed description of the algorithm appears in \cite{han2015randomized}.
 
 The original algorithm tracks the knapsacks of both $A_1$ and $A_2$ and includes an early stopping condition: if the expected packed weight (i.e., half the sum of the two knapsack weights) exceeds $\frac{7}{10}$, the algorithm stops and packs nothing else. This is because the randomized algorithm may later reduce its packed weight and cease to be $\frac{10}{7}$-competitive. Consider a sequence of items with weights $(0.67,0.04,0.6)$. In round 2, both $A_1$ and $A_2$ keep items 1 and 2, and the randomized algorithm has expected packed weight $0.71$ (which is sufficient to be $\frac{10}{7}$-competitive.) If there were no stopping condition, then in round 3 both $A_1$ and $A_2$ would replace the $M_4$ item of weight $0.67$ with the $M_4$ item of weight $0.6$, and the algorithm would no longer be $\frac{10}{7}$-competitive. Since our derandomized algorithm makes a biased choice between $A_1$ and $A_2$, we must update the stopping threshold to reflect this bias, and stop if a weighted approximation of the expected packed weight exceeds the threshold in some round. In particular, we stop packing if $(2-\sqrt2)w_i(A_1)+(\sqrt2-1)w_i(A_2)>\frac{3\sqrt2-1}{5}$, where $w_i(A_j)$ is the weight of $A_j$'s knapsack immediately after round $i$ and $\frac{3\sqrt2-1}{5}$ is a constant that we justify later.

\subsubsection{A Derandomized Algorithm} We present a derandomized algorithm (Algorithm \ref{alg:rom-prop-knapsack}) that is $\frac{5}{3\sqrt2-1} \leq 1.542$-competitive in the ROM. Following our standard derandomization approach, we replace the fair bit in the original algorithm with a bit extracted from the stochastic input. We use the \texttt{COMBINE} procedure to extract a bit with bounded bias.

\begin{algorithm}\footnotesize
\caption{The Proportional Weights ROM Algorithm.}
\label{alg:rom-prop-knapsack}
\begin{algorithmic}
\State Accept as many copies of the first item as can fit in the knapsack. If the first item is large, then pack it and terminate the algorithm.
\State When the first item whose weight differs from $w_1$ arrives, extract a bit $b$ with worst-case bias $2-\sqrt2$.
\State
\If {$b=1$} \State Execute $A_1$ on the remainder of the input.
\Else \State Execute $A_2$ on the remainder of the input.
\EndIf
\end{algorithmic}
\end{algorithm}

A simple proof shows that Algorithm \ref{alg:rom-prop-knapsack} produces a knapsack identical to that produced by either $A_1$ or $A_2$ when all three algorithms are run on the same input. Hence, the only penalty to the competitive ratio is from the bias of the extracted bit.
\begin{lemma}
    Algorithm \ref{alg:rom-prop-knapsack} produces a knapsack identical to that produced by either $A_1$ or $A_2$ when all three algorithms are run on the same input sequence.
\end{lemma}
\begin{proof}
    We first show that all algorithms accept the same prefix of initial identical items. Then, once a distinct item arrives, the derandomized algorithm executes one of $A_1$ or $A_2$ on the remainder of the sequence and the claim follows.

    If the first item is large, then each algorithm accepts it and nothing else. Hence, the resulting knapsacks are identical.
    
    If the first item is medium, then $A_1$ takes as many such copies as possible and $A_2$ behaves similarly but may stop once a subset of the identical items has total weight at least $\frac{9}{10}$. If a distinct item arrives and the bit selects $A_2$, either $A_2$ has packed the same items as $A_1$ (and hence the derandomized algorithm up to this point), or it meets the weight threshold and terminates immediately. If instead the bit selects $A_1$, the derandomized algorithm executes $A_1$ on the remainder of the input and therefore matches $A_1$.
        
    On the other hand, if the first item is small, then $A_1$ and our derandomized algorithm accept as many identical copies of it as possible. $A_2$ does the same until its knapsack has total weight at least $\frac{9}{10}$. If the weight threshold was reached by packing identical inputs, then the derandomized algorithm matches $A_2$ because both the derandomized algorithm and $A_2$ have a subset of total weight at least $\frac{9}{10}$ when the distinct item arrives and therefore terminate immediately. Otherwise, the knapsacks of $A_1$, $A_2$, and the derandomized algorithm match up to the arrival of the first distinct item. If the weight threshold was not reached by the identical input prefix, then the derandomized algorithm executes the selected algorithm on the rest of the sequence and therefore matches it.
\end{proof}

The algorithm achieves its best performance when the knapsack produced by $A_1$ is selected with higher probability. This can be verified by repeating the subsequent analysis with the selection probabilities of $A_1$ and $A_2$ interchanged. Applying the cases of Lemmas 3, 4 and 5 from \cite{han2015randomized} then yields a competitive ratio of approximately $1.542$.

\begin{theorem}
    Algorithm \ref{alg:rom-prop-knapsack} is a $\frac{5}{3\sqrt2-1}$-competitive deterministic algorithm in the ROM.
\end{theorem}
\begin{proof}
    The proof follows by modifying the analyses of Lemmas 3, 4 and 5 from \cite{han2015randomized} to account for the biased selection probabilities.
    
    In Lemma 3, the ratios are maximized in cases (c) and (g.2). In these cases, the optimal knapsack has weight at most 1, while the derandomized algorithm's expected weight is given by a combination of the bounds derived in \cite{han2015randomized}. Hence, whenever one of the conditions in Lemma 3 holds, the ratio is bounded above by $\frac{1}{\frac{9}{10}(2-\sqrt2)+\frac{5}{10}(\sqrt2-1)} \leq 1.503$. The remaining cases are handled in Lemmas 4 and 5. 

    Lemma 4 considers the remaining cases under the assumption that the input contains no small items. Case 1 leads to a contradiction. Between Cases 2 and 3, the ratio is maximized in Case 3 (when the sequence contains an $M_4$ item) and is hence bounded above by $\frac{\frac{9}{10}}{\frac{13}{10}(\sqrt2-1)+\frac{6}{10}(3-2\sqrt2)} \leq 1.404$ (since in this case the optimal knapsack has weight at most $\frac{9}{10}$.)
    
    Similar to Lemma 4, Lemma 5 considers the cases not covered by Lemma 3 in which small items are present in the input. We first consider the cases in which both $A_1$ and $A_2$ accept all small items in the sequence, and the case in which the sequence contains no medium items. In the first case, Han et al. isolate the contribution of the small items and apply Lemma 4 to the rest of the sequence of (medium or large) items. By the above analysis, the competitive ratio is bounded above by $1.404$. When the sequence contains no medium items, both algorithms achieve a $\frac{10}{7}$-approximation, hence the derandomized algorithm is $\frac{10}{7}$-competitive independent of the bias of the extracted bit.

    The remaining non-contradictory cases are those in which the sequence contains medium items and exactly one of $A_1$ or $A_2$ rejects some small items in some $k$th round. In the first case, the knapsacks contain at least $\frac{4}{10}$ weight at the $k$th round, and one of the two knapsacks has weight at least $w_a+\frac{4}{10}$, where $w_a$ is the weight of the heaviest small item that the respective knapsack rejects in round $k$. The derandomized algorithm selects the knapsack of $A_1$ with probability at most $2-\sqrt2$ and that of $A_2$ with probability at least $\sqrt2-1$. Hence, the expected weight after round $k$ is either $(2-\sqrt2)(w_a+\frac{4}{10})+(\sqrt2-1)w_k(A_2)$ or $(\sqrt2-1)(w_a+\frac{4}{10})+(2-\sqrt2)w_k(A_1)$, where $w_k(A_i)$ is the weight of the knapsack of $A_i$ immediately after the $k$th round. In either case, the expectation is bounded below by $\frac{14}{10}(\sqrt2-1)+\frac{4}{10}(3-2\sqrt2)$, and hence the algorithm is $\frac{1}{\frac{14}{10}(\sqrt2-1)+\frac{4}{10}(3-2\sqrt2)}=\frac{5}{3\sqrt2-1} \leq 1.542$-competitive in this case. The remaining case (Case 2 in \cite{han2015randomized}) can be verified similarly and yields a smaller competitive ratio.
    
    Taking the maximum over all cases establishes that the derandomized algorithm is $\frac{5}{3\sqrt2-1} \leq 1.542$-competitive in the ROM. 
\end{proof}

This is strictly better than the tight deterministic bound of $\frac{1+\sqrt5}{2}\approx 1.618$ in the adversarial model.

\section{Derandomizing Makespan Algorithms in the Random-Order Model}

\subsection{Derandomizing the Two-Machine Job Shop Algorithm of Kimbrel and Saia \cite{KimbrelS00}}
\label{appendix:KimbrelSaia}

We begin by describing the 1-bit barely random algorithm of Kimbrel and Saia \cite{KimbrelS00}. The algorithm fixes an arbitrary ranking of the jobs and then selects between one of two deterministic algorithms with a fair coin flip. The first algorithm, \proc{Forward}, always executes the highest-ranked available job on machine 1 while machine 2 executes the lowest-ranked available job. We say that a job is \textit{available} on machine $i$ if its next (incomplete) operation requests machine $i$. The second algorithm, \proc{Reverse}, swaps the behaviors of the machines: machine 1 always executes the lowest-ranked available job and machine 2 always executes the highest-ranked available job. Preemptions are used as necessary to ensure that the currently executing job is either of highest or lowest rank (depending on the algorithm and machine.) Kimbrel and Saia show that in expectation, this 1-bit algorithm achieves a tight competitive ratio of 1.5 against oblivious adversaries. 

Next, we describe a derandomization of this algorithm in the ROM. First, fix an arbitrary ranking of the jobs and begin by executing \proc{Forward}. Operations are identical if their processing times are equal. When we observe two completed operations with unequal processing requirements, we extract a bit $b$ with a worst-case bias of $2-\sqrt2$. If $b=1$, then the algorithm continues executing \proc{Forward}. If $b=0$, the algorithm waits until the currently running operations complete and then switches to use \proc{Reverse}. If all operations have equal processing time, then our algorithm is 2-competitive from the observation that it keeps machines busy whenever jobs are available.

To simplify the analysis of the derandomization, we pad each job that has already completed some of its operations with dummy operations of the same processing requirements, ordered consistently with the original job structure. When the algorithm schedules a dummy operation, it forces the machine requested by the dummy operation to idle until it completes or is preempted. The addition of dummy operations is independent of the problem instance and in particular does not change the optimal offline makespan. 

This derandomization admits a loose bound but serves as proof that derandomization is possible. In practice, one would be better served by executing the na\"ive 2-competitive algorithm that keeps both machines busy.

\begin{theorem}
 Derandomizing the online two-machine job shop (with resuming) algorithm of Kimbrel and Saia \cite{KimbrelS00} yields a deterministic $10-5\sqrt2\leq 2.929$-competitive algorithm for the online two-machine job shop problem (with resuming) in the ROM.
\end{theorem}
\begin{proof}
    Let $\proc{Opt}$ denote the optimal offline makespan for the realized ROM sequence. 
    
    If $b=1$ then the makespan matches the solution, say $C_1$, of \proc{Forward} from \cite{KimbrelS00}. If $b=0$, the (prefix) time spent scheduling identical-length operations before switching to \proc{Reverse} is at most $2\proc{Opt}$: at least one of the two machines is active at all times unless every job has completed (in which case the algorithm terminates and we do not bother running $\proc{Reverse}$). The total processing times completed by machine 1 and machine 2 on the prefix are each a lower bound on $\proc{Opt}$ since any feasible schedule must process these operations on their respective machines. 
    
    Let $\mathbb{E}[\proc{Alg}]$ denote the expected makespan of our algorithm, and let $C_2$ denote the makespan obtained from running \proc{Reverse} on the ranking. Let $C_2'$ denote the makespan if $b=0$. By the above argument, $C_2' \leq C_2+2 \proc{Opt}$. Kimbrel and Saia \cite{KimbrelS00} prove that $C_1+C_2 \leq 3\proc{Opt}$. Thus, $C_1+C_2' \leq 5\proc{Opt}$, and \[\mathbb{E}[\proc{Alg}] \leq (2-\sqrt2)(C_1+C_2') \leq (2-\sqrt2)\cdot5\proc{Opt}\leq 2.929 \proc{Opt}\] 
\end{proof}

Attempting to leverage the stochastic nature of the input degrades competitive guarantees compared to simpler deterministic strategies; even against a fully-adaptive adversary, the strategy that keeps both machines busy whenever possible is 2-competitive. The main loss in our derandomization comes from the need to reset.


\subsection{Derandomizing the Makespan Algorithm of Albers \cite{Albers02}}
\label{appendix:albersmakespan}

\begin{theorem}
         Derandomizing the online identical-machines makespan algorithm of Albers \cite{Albers02} yields a deterministic $1.916(4-2\sqrt2)+1\approx 3.245$-competitive algorithm for the online identical-machines makespan problem in the ROM.
\end{theorem}
\begin{proof}
    We begin by greedily spreading identical copies of the first job uniformly across the $m$ machines. Since the copies have equal processing time, the makespan of the most loaded machine when a distinct job arrives is a lower bound on the optimal offline makespan. 

    Upon the arrival of the distinct job, we extract a bit with a worst-case bias of $2-\sqrt2$. We use this bit to select one of the deterministic algorithms from Albers \cite{Albers02}. The choice of which strategy is selected with higher probability is arbitrary. Our derandomization then uses the selected algorithm to process the remainder of the jobs arriving online. When we begin this phase, we make the additional assumption that the machine loads of Albers' algorithm are initialized to 0. This allows the algorithm to ignore the work done in the previous phase, which we later account for with an additive \proc{Opt} term.

    We continue by applying and reapplying the definition of competitive ratios, using the fact that the algorithm of \cite{Albers02} is $1.916$-competitive. The bias from our bit adds at most a $(2-\sqrt2-\frac{1}{2})$ factor to the expected makespan.  Let $\proc{Opt}$ denote the optimal offline makespan for the realized ROM instance. Let $\mathbb{E}[\proc{Alg}]$ denote the expected makespan of our derandomization, and let $\proc{A}_1$ and $\proc{A}_2$ denote the makespans of the deterministic algorithms of Albers \cite{Albers02}. By definition of competitive ratios,
    \begin{align*}
        \mathbb{E}[\proc{Alg}] &\leq \proc{Opt} + (2-\sqrt2)(\proc{A}_1+\proc{A}_2) = \proc{Opt} + \frac{1}{2}(\proc{A}_1+\proc{A}_2)+(2-\sqrt2-\frac{1}{2})(\proc{A}_1+\proc{A}_2) \\
        &\leq \proc{Opt} + 1.916 \, \proc{Opt} + (2-\sqrt2-\frac{1}{2})(2\cdot 1.916 \, \proc{Opt}) \\
        &\leq 3.245 \, \proc{Opt}
    \end{align*}
        
    where the additive \proc{Opt} is due to the initial greedy phase. 
\end{proof}

\end{document}